\documentclass[11pt]{article}

\usepackage[margin=1in]{geometry}
\usepackage[T1]{fontenc}
\usepackage{amsmath, amssymb, amsthm}
\usepackage{booktabs}
\usepackage{multirow}
\usepackage{graphicx, xcolor}
\usepackage{enumitem}
\usepackage{algorithm}
\usepackage{algpseudocode}
\usepackage{microtype}
\usepackage[numbers,sort&compress]{natbib}
\usepackage{hyperref}
\usepackage{orcidlink}
\let\oldparagraph\paragraph
\renewcommand{\paragraph}[1]{\oldparagraph{#1.}}
\makeatletter
\AtBeginDocument{\renewcommand{\href}[2]{\@ifnextchar.{\@gobble}{}}}
\makeatother

\theoremstyle{plain}
\newtheorem{theorem}{Theorem}[section]
\newtheorem{proposition}{Proposition}[section]
\newtheorem{corollary}{Corollary}[section]
\newtheorem{lemma}{Lemma}[section]
\theoremstyle{definition}
\newtheorem{definition}{Definition}[section]
\newtheorem{remark}{Remark}[section]
\newtheorem{example}{Example}[section]

\newcommand{\E}{\mathbb{E}}
\newcommand{\F}{\mathcal{F}}
\newcommand{\It}{\mathcal{I}_t}
\newcommand{\Ihat}{\hat{\mathcal{I}}}
\newcommand{\norm}[1]{\left\lVert#1\right\rVert}
\DeclareMathOperator{\Var}{Var}

\newcommand{\outlinebox}[1]{\par\noindent #1\par\medskip}

\begin{document}

\title{Estimating the Conditional Forecast-Revision Scale in Sequential Models:
Local-Smoothing Limits, Matched Models, and Cost--Accuracy Trade-offs}

\author{Hui Mean Foo\thanks{\texttt{huimean87@gmail.com}} \qquad
  Yuan-chin Ivan Chang\,\orcidlink{0000-0002-4977-7721}%
  \thanks{Corresponding author. \texttt{ycchang@as.edu.tw}}\\[4pt]
  \normalsize Institute of Statistical Science, Academia Sinica, Taipei, Taiwan}

\date{}

\maketitle

\begin{abstract}
  The \emph{conditional forecast-revision scale}
  $\It=\{\Var(\E[X_{t+1}\mid\F_t]\mid\F_{t-1})\}^{1/2}$ measures the
  history-specific size of the forecast update induced by observing $X_t$.
  Because it is a conditional second moment built from two unknown conditional
  means, it is not directly observed. We study which estimator of $\It$ should
  be used under different structural assumptions and computational budgets. The
  comparison includes a block bootstrap, a conditional-variance model, a fitted
  state-space model, two $O(1)$ streaming smoothers, and the forget gate of an
  already-trained recurrent network. An error decomposition separates
  one-step-prediction error from conditional-second-moment tracking error. We
  show that externally tuned lag-only smoothers can be inconsistent when $\It$
  changes at the sampling scale, although they attain the usual $T^{-2/3}$
  mean-squared-error rate ($T^{-1/3}$ for $\It$) under slow variation; a correctly
  specified state-space estimator escapes
  this limit by using the current state. In volatility-driven designs, a cheap
  conditional-variance model is more accurate and over one hundred times cheaper
  \emph{as a point estimator} than the implemented block bootstrap, whose value
  lies in the sampling distribution it provides rather than in point tracking. In state-driven designs,
  only the structurally matched filter recovers the fast variation. Read directly, a
  trained network's forget gate does not track $\It$ --- though a supervised linear
  probe on the full gate vector does, so $\It$ is linearly decodable from the full
  gate vector but not available for free. These results yield a practical rule: identify
  the conditional-second-moment structure, match the estimator to it, and then
  choose the least costly adequate method.
\end{abstract}

\medskip
\noindent\textbf{Keywords:}
conditional forecast-revision scale;
conditional second moment;
local stationarity;
block bootstrap;
Markov-switching state-space model;
GARCH;
cost--accuracy trade-off;
recurrent network.

\smallskip
\noindent\textbf{MSC 2020:} 62M10; 62M20; 62G09; 62G05; 62M05; 62M45.

\medskip

\section{Introduction}
\label{sec:intro}

How much does a new observation change the next forecast? We study the
\emph{conditional forecast-revision scale} $\It$, the conditional
root-mean-square size of the one-step forecast update after observing $X_t$.
Large values indicate that the observation materially changes the forecast;
small values indicate that it is largely redundant. This history-specific
quantity is useful wherever an adaptive procedure must decide how strongly to
update, when to increase monitoring sensitivity, or where to spend limited
sampling or computation.

Estimating $\It$ is difficult for two reasons: (1) it is a conditional,
time-varying second moment rather than a realized statistic or a population
constant, and (2) it depends on the difference between two unknown conditional
means, estimated before and after $X_t$ arrives, so the revision can be small
relative to the noise from model fitting. The target is identified through one of
three routes --- shared-history replication, single-path nonparametric estimation
under local stationarity \citep{dahlhaus1997fitting}, or a correctly specified
single-path model --- which imply different estimators and different failure modes.

In this paper, we ask which estimator of $\It$ should be used for a given data
structure and computational budget. We compare a cost-ordered menu: a
block-bootstrap estimator for dependent data
\citep{kunsch1989jackknife,politis1994stationary,lahiri2003resampling}, a
conditional-variance model, a fitted state-space estimator, two streaming
lag-only smoothers, and the forget gate of an already-trained recurrent network.
The comparison is made against exact ground truth in controlled designs and is
supplemented by real-data illustrations.

We organize this paper around a central distinction between forecasting the next observation and tracking the conditional second moment of the resulting revision. Separating these two sources of error (Lemma~\ref{lem:decomp}) clarifies the analysis.
There are four contributions in this paper. First, we define $\It$ as an explicit conditional estimand and give it a decision-theoretic interpretation: the expected one-step reduction in optimal squared prediction risk. We also distinguish three routes to its identification (Sections~\ref{sec:Itproperties}--\ref{sec:regimes}).
Second, we establish a slow--fast smoothing dichotomy. Externally tuned, lag-only smoothers attain mean-squared error $O(T^{-2/3})$ for $\hat v_t=\Ihat_t^2$ (root-mean-squared error $T^{-1/3}$ for $\It$) when $\It$ varies slowly, but they have a nonvanishing error floor for a \emph{predictable} $\It$ that switches at the sampling scale (Theorem~\ref{thm:floor}; Corollary~\ref{cor:dichotomy}). This is a limitation of the estimator class rather than of the estimand: a correctly specified state-space filter recovers state-driven variation (Proposition~\ref{prop:achieve}), whereas a conditional-variance model recovers volatility-driven variation.
Third, we compare structurally distinct estimators against exact ground truth on a common cost--accuracy scale. The results show that structural matching can matter more than computational effort. In the volatility designs studied, the conditional-variance estimator is both more accurate and, as a point estimator, incurs less than one-hundredth of the computational cost of the block bootstrap.
Fourth, we translate this comparison into a diagnostic selection workflow (Section~\ref{sec:selection}) and establish a precise negative result for the forget gate as a \emph{free} proxy. A naive readout fails, although a supervised probe applied to the full gate vector recovers $\It$. Thus, the relevant information is present in the network state but is not directly accessible without additional supervision.

\subsection{Related work and positioning}
\label{sec:related}
Our target sits apart from information-theoretic measures of an observation's
value---predictive information \citep{bialek2001predictability}, expected
information gain \citep{lindley1956measure}, and Bayesian surprise
\citep{itti2009bayesian}: $\It$ is a
history-indexed, squared-loss functional of the conditional-\emph{mean} update, so
it can vanish even when an observation reshapes the predictive variance or tails
(Proposition~\ref{prop:mi}).

When the conditional mean is linear, $\It$ reduces to a conditional volatility,
linking it to ARCH/GARCH models
\citep{engle1982autoregressive,bollerslev1986generalized} and time-varying
volatility \citep{andersen2003modeling,dahlhaus2006statistical}; local stationarity
\citep{dahlhaus1997fitting} supplies a single-path identification regime, and block
and stationary bootstraps handle the dependence
\citep{kunsch1989jackknife,politis1994stationary,lahiri2003resampling}. Model-based
recovery draws on Kalman and regime-switching filters
\citep{kalman1960new,hamilton1989new,douc2004asymptotic}, while gated recurrent
networks offer a different comparison: LSTM and GRU gates are commonly read as
retention or update mechanisms
\citep{hochreiter1997long,gers2000learning,cho2014learning}, yet predictive training
does not force any gate to equal a defined conditional moment.

We are unaware of prior work that makes this history-indexed forecast-revision
scale the primary sequential estimand and jointly studies its identification,
estimator error, computational cost, smoothing limits, and recovery by structurally
matched models; that joint treatment, not a new population identity, is our
contribution.

The remainder of the paper defines $\It$ and its decision-theoretic interpretation (Section~\ref{sec:Itproperties}); develops the three identification routes, the cost-ordered estimator menu, and the smoothing dichotomy (Section~\ref{sec:S3}); evaluates the estimators against exact ground truth, tests the forget-gate proxy, and builds a diagnostic selection workflow (Section~\ref{sec:numerical}); and concludes (Section~\ref{sec:conclusion}). Proofs are in Appendix~\ref{app:proofs} and the online Supplement.

\section{The Conditional Forecast-Revision Scale}
\label{sec:Itproperties}

In this section, we formalize the quantity introduced above as a well-posed sequential estimand. We define $\It$ as the conditional root-mean-square magnitude of the one-step forecast revision and distinguish this history-indexed second moment from a realized statistic or population constant. We then provide its decision-theoretic interpretation, contrast it with information-theoretic measures (Proposition~\ref{prop:mi}), derive the AR(1) case in closed form (Example~\ref{ex:ar1}), and state the identifiability conditions for single-path estimation, motivating the analysis in Section~\ref{sec:S3}.

Write the \emph{forecast revision}, or Doob increment, as
\begin{equation}
D_t \;:=\; \E[X_{t+1}\mid\F_t] \;-\; \E[X_{t+1}\mid\F_{t-1}],
\label{eq:doob-increment}
\end{equation}
which is the increment from time $t-1$ to $t$ of the Doob martingale
$s\mapsto\E[X_{t+1}\mid\F_s]$. By the tower property,
$\E[D_t\mid\F_{t-1}]=0$; hence, $(D_t)$ is a martingale difference sequence and may be viewed as the forecast-revision counterpart of the one-step prediction innovation.

\begin{definition}[Conditional forecast-revision scale]
\label{def:It}
The \emph{conditional forecast-revision scale} (CFRS) $\It$ is the conditional
root-mean-square forecast revision,
\begin{equation}
  \It \;:=\; \bigl(\E[\,\norm{D_t}^2 \mid \F_{t-1}\,]\bigr)^{1/2}.
  \label{eq:It-def}
\end{equation}
\end{definition}

\begin{remark}[Three levels, and why this one]
\label{rem:three-levels}
The increment \eqref{eq:doob-increment} admits three natural norms, nested by
the tower property:
\[
  \underbrace{\norm{D_t}}_{\text{(B) realized}},
  \qquad
  \underbrace{\bigl(\E[\norm{D_t}^2\mid\F_{t-1}]\bigr)^{1/2}}_{\text{(C) conditional }=\,\It},
  \qquad
  \underbrace{\bigl(\E\norm{D_t}^2\bigr)^{1/2}}_{\text{(A) unconditional}},
\]
with $\E[\text{(B)}^2\mid\F_{t-1}]=\text{(C)}^2$ and
$\E[\text{(C)}^2]=\text{(A)}^2$. We adopt (C) for four structural reasons.
\emph{(i)} It is \emph{predictable} ($\F_{t-1}$-measurable), so a gate can in
principle compute it, whereas the realized (B) --- for a Gaussian AR(1),
$D_t=\phi\varepsilon_t$ --- is an i.i.d.\ noise draw, and regularising toward it
would fit noise. \emph{(ii)} It \emph{varies in $t$ for the right reason}: for the
two-state HMM it is large exactly when the filtered state is uncertain, producing
spikes near regime changes, whereas the unconditional (A) is constant under
stationarity. \emph{(iii)} It \emph{reduces to} (A) under conditional
homoskedasticity. \emph{(iv)} It supports an \emph{information-theoretic reading}:
in the Gaussian setting of Proposition~\ref{prop:mi}, $\mathcal J_t$ is strictly
increasing in $\It^2/\sigma^2_{\mathrm{post},t}$, while the realized version admits
no such interpretation ($\It$ alone need not determine the mutual information,
property~(i) below).
\end{remark}

\noindent\textbf{Structural properties.} Four properties fix the scope of the
estimand. \emph{(i) Zero iff no mean update.} $\It=0$ if and only if
$\E[X_{t+1}\mid\F_t]=\E[X_{t+1}\mid\F_{t-1}]$ almost surely: observing $X_t$ does
not move the one-step \emph{mean} forecast. This is a statement about
squared-error mean prediction, \emph{not} about all predictive content --- $X_t$
may sharpen the conditional variance or tails of $X_{t+1}$ while leaving its
conditional mean unchanged, giving $\It=0$ with strictly positive conditional
mutual information. (A pure-ARCH law $X_{t+1}=\sqrt{\omega+\alpha X_t^2}\,
\eta_{t+1}$, $\eta_{t+1}\sim\mathcal N(0,1)$, has $\E[X_{t+1}\mid\F_t]\equiv0$, so
$\It\equiv0$, yet $X_t$ fully determines the scale of $X_{t+1}$.) Thus $\It$
measures the \emph{predictable mean-forecast revision under squared-error loss},
a task-specific quantity; we use the precise name (CFRS) and reserve
``information gain'' as an informal synonym with this scope understood. \emph{(ii) Scale.} $\It$ is homogeneous of degree one in $X$,
and $\It/\sigma$ with $\sigma^2=\Var X_{t+1}$ is scale-free.
\emph{(iii) Gaussian mutual-information link:} Proposition~\ref{prop:mi}.
\emph{(iv) Piecewise stationarity:} under segment-wise conditionally
homoskedastic models with fixed mean-dynamics parameters, $\It$ is constant within
segments and jumps at a change-point $\tau^*$; more generally, stationarity fixes
the distributional law but does \emph{not} make the conditional trajectory
constant --- a stationary GARCH or HMM has a time-varying $\It$ --- so any
monitoring interpretation is model-specific.

\begin{proposition}[History-specific Gaussian information identity]
\label{prop:mi}
Define the history-specific conditional information
\[
  \mathcal J_t := \E\!\Big[\log\frac{p(X_{t+1}\mid X_t,\F_{t-1})}
                                     {p(X_{t+1}\mid\F_{t-1})}\ \Big|\ \F_{t-1}\Big].
\]
Suppose the conditional law of $(X_t,X_{t+1})$ given $\F_{t-1}$ is jointly
Gaussian and $\sigma^2_{\mathrm{post},t}:=\Var(X_{t+1}\mid\F_t)$ is
$\F_{t-1}$-measurable. Then, almost surely,
$\mathcal J_t = \tfrac12\log(1+\It^2/\sigma^2_{\mathrm{post},t})$, and the
conventional scalar conditional mutual information satisfies
$I(X_t;X_{t+1}\mid\F_{t-1})=\E[\mathcal J_t]$. If the post-update variance depends
on $X_t$, $\It$ alone determines neither quantity.
\end{proposition}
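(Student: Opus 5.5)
Conditionally on $\F_{t-1}$, write $m_t:=\E[X_{t+1}\mid\F_{t-1}]$ and $\sigma^2_{\mathrm{prior},t}:=\Var(X_{t+1}\mid\F_{t-1})$. By joint Gaussianity of $(X_t,X_{t+1})$ given $\F_{t-1}$, the conditional law of $X_{t+1}$ given $\F_t$ (that is, given $X_t$ and $\F_{t-1}$) is Gaussian. Its mean $\E[X_{t+1}\mid\F_t]$ is affine in $X_t$, with coefficients that are $\F_{t-1}$-measurable, and its variance is $\sigma^2_{\mathrm{post},t}$. The hypothesis that $\sigma^2_{\mathrm{post},t}$ is $\F_{t-1}$-measurable is automatic in this Gaussian setting, and I will use it directly.

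The first step is the conditional law of total variance:
\[
\sigma^2_{\mathrm{prior},t}=\E\bigl[\Var(X_{t+1}\mid\F_t)\mid\F_{t-1}\bigr]+\Var\bigl(\E[X_{t+1}\mid\F_t]\mid\F_{t-1}\bigr).
\]
The first term equals $\sigma^2_{\mathrm{post},t}$ by predictability. Since $\E[D_t\mid\F_{t-1}]=0$, the second term equals $\E[D_t^2\mid\F_{t-1}]=\It^2$ by Definition~\ref{def:It}. Hence $\sigma^2_{\mathrm{prior},t}=\sigma^2_{\mathrm{post},t}+\It^2$.

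The second step computes $\mathcal J_t$. Both densities are Gaussian:
\[
\log\frac{p(X_{t+1}\mid X_t,\F_{t-1})}{p(X_{t+1}\mid\F_{t-1})}=\tfrac12\log\frac{\sigma^2_{\mathrm{prior},t}}{\sigma^2_{\mathrm{post},t}}-\frac{(X_{t+1}-\E[X_{t+1}\mid\F_t])^2}{2\sigma^2_{\mathrm{post},t}}+\frac{(X_{t+1}-m_t)^2}{2\sigma^2_{\mathrm{prior},t}}.
\]
Take $\E[\cdot\mid\F_{t-1}]$, using the tower property through $\F_t$ for the second term. The second term then contributes $-\tfrac12$ and the third contributes $+\tfrac12$, so $\mathcal J_t=\tfrac12\log(\sigma^2_{\mathrm{prior},t}/\sigma^2_{\mathrm{post},t})=\tfrac12\log(1+\It^2/\sigma^2_{\mathrm{post},t})$ almost surely. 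Degenerate cases ($\sigma^2_{\mathrm{post},t}=0$) should be excluded or interpreted as $+\infty$ with a brief remark. The identity $I(X_t;X_{t+1}\mid\F_{t-1})=\E[\mathcal J_t]$ is then the definition of conditional mutual information as an expected conditional log-likelihood ratio, applied with the tower property. The only technical point is integrability of the log ratio, which holds for Gaussian laws.

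The main obstacle is the final negative claim, since it requires a counterexample rather than a computation. I would drop predictability and construct two conditional models with the same $\It$ but different $\mathcal J_t$. A clean choice takes $X_{t+1}=\E[X_{t+1}\mid\F_t]+s(X_t)\eta_{t+1}$ with the same conditional-mean map, so $D_t$ and hence $\It$ are unchanged, but with two variance functions $s$ that share the same $\E[s(X_t)^2\mid\F_{t-1}]$ while differing in shape. In that situation the joint law is no longer Gaussian, so $\mathcal J_t$ no longer reduces to the formula above. I would argue that $\mathcal J_t$ differs between the two models, for instance by taking $\It=0$ (constant conditional mean), where the mixture-of-scales case gives $\mathcal J_t>0$ while constant $s$ gives $\mathcal J_t=0$. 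This mirrors the pure-ARCH example in property~(i), and a strict Jensen or KL-positivity argument settles the strict inequality.
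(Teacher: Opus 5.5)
Your proposal is correct and matches the paper's proof (Supplement S11) essentially step for step: the conditional law of total variance gives $\sigma^2_{\mathrm{prior},t}=\It^2+\sigma^2_{\mathrm{post},t}$, the two quadratic terms of the Gaussian log-density ratio contribute $-\tfrac12+\tfrac12$ in conditional mean, and the negative claim is settled by a pure variance channel with $\It=0$ but $\mathcal J_t>0$, for which the paper uses the explicit ARCH law $X_{t+1}=\sqrt{\omega+\alpha X_t^2}\,\eta_{t+1}$. Your additions are harmless refinements: predictability of $\sigma^2_{\mathrm{post},t}$ is automatic when $\F_t=\sigma(\F_{t-1},X_t)$, you flag the degenerate case $\sigma^2_{\mathrm{post},t}=0$, and you pair the variance channel with a constant-$s$ model so that equal $\It$ visibly yields different $\mathcal J_t$; note only that your $m_t$ denotes $\E[X_{t+1}\mid\F_{t-1}]$, the opposite of the paper's indexing.
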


\begin{proof}[Proof idea]
The law of total variance and the $\F_{t-1}$-measurability of
$\sigma^2_{\mathrm{post},t}$ give
$\Var(X_{t+1}\mid\F_{t-1})=\It^2+\sigma^2_{\mathrm{post},t}$, so the post- and
pre-update laws are Gaussian with variances $\sigma^2_{\mathrm{post},t}$ and
$\It^2+\sigma^2_{\mathrm{post},t}$; the conditional-mean log-density ratio then
collapses to one half the log of their ratio,
$\mathcal J_t=\tfrac12\log\!\big(1+\It^2/\sigma^2_{\mathrm{post},t}\big)$, and a
further expectation gives $I(X_t;X_{t+1}\mid\F_{t-1})=\E[\mathcal J_t]$. When
$\sigma^2_{\mathrm{post},t}$ depends on $X_t$ this decomposition breaks: a
variance-only channel has $\It=0$ with $\mathcal J_t>0$, the stated counterexample.
The full computation, term by term, is in Supplement~S11.
\end{proof}

\begin{example}[AR(1): closed form, and a degenerate case]
\label{ex:ar1}
For $X_{t+1}=\phi X_t+\varepsilon_{t+1}$ with i.i.d.\ innovations,
$D_t=\phi X_t-\phi^2X_{t-1}=\phi\,\varepsilon_t$, hence
$\It=|\phi|\,\sigma_\varepsilon$. This gives an exact closed form against which any
estimator can be checked ---
but note that it is \textbf{constant in $t$}. AR(1) is therefore a
\emph{level-accuracy} benchmark only; it carries no time variation, so it cannot
be used to validate that an estimator (or a gate) tracks the \emph{trajectory}
of $\It$. Time-variation studies must use the HMM, regime-switching, or
conditionally heteroskedastic processes, where the conditional second moment
genuinely moves.
\end{example}

\paragraph{Connection to predictive power loss}
$\It$ has an exact decision-theoretic reading under squared loss. With
$m_s=\E(X_{t+1}\mid\F_s)$, let $R_{t-1}:=\E[(X_{t+1}-m_{t-1})^2\mid\F_{t-1}]$ and
$R_{t\mid t-1}:=\E[(X_{t+1}-m_t)^2\mid\F_{t-1}]$ be the pre-update Bayes risk and
the previsible expectation of the post-update Bayes risk. Orthogonality of
conditional expectations gives
$R_{t-1}-R_{t\mid t-1}=\E[(m_t-m_{t-1})^2\mid\F_{t-1}]=\It^2$.
Thus $\It^2$ is exactly the expected one-step reduction in optimal squared
prediction risk generated by revealing $X_t$. This identity is more general than
the Gaussian mutual-information link (Proposition~\ref{prop:mi}) and is the
primary meaning we use throughout.

\paragraph{Identifiability and approximation error}
When a single fitted one-step predictor $\hat m$ supplies both forecasts,
$\hat D_t-D_t=\varepsilon_t-\varepsilon_{t-1}$ with $\varepsilon_s=\hat m_s-m_s$,
so the revision error depends on the \emph{change} in predictor error between two
adjacent information sets, not on either absolute error. In conditional $L^2$,
$\norm{\hat D_t-D_t}_{t-1}\le\norm{\varepsilon_t}_{t-1}+\norm{\varepsilon_{t-1}}_{t-1}$,
and any common fitted-model component shared by $\varepsilon_t$ and
$\varepsilon_{t-1}$ cancels in the difference. This motivates the sharper
empirical question posed later (Section~\ref{sec:decomp}): once the true predictor
is supplied, how much error remains in estimating the conditional second moment?

\section{Estimators and Their Limits}
\label{sec:S3}

The estimand $\It$ is \emph{predictable and time-varying}
(Definition~\ref{def:It}); estimating it is qualitatively harder than estimating
an unconditional moment, and both the difficulty and the shape of the menu
depend on what data are available.

\subsection{Three identification routes}
\label{sec:regimes}
Because $\It$ is $\F_{t-1}$-measurable, sample size alone does not determine what
is identified; three routes, using different sources of information, must be kept
apart.

\textbf{Route 1: shared-history replication.} When the past can be held fixed and
the future replicated --- controlled simulation, or a designed experiment --- the
cross-branch mean of $D_t^2$ estimates the history-specific $v_t=\It^2$ exactly as
the number of branches $\to\infty$. This is how the exact ground truth is obtained
in the synthetic studies. \emph{Independent} paths with distinct histories (an
i.i.d.\ panel) do \emph{not} identify this quantity by naive pooling --- they
identify the unconditional level of Remark~\ref{rem:regimes-estimate} --- though a
conditional mapping is recoverable if the histories or a sufficient state are
modelled.

\textbf{Route 2: nonparametric single-path.} With one series and no model, a lag
window substitutes nearby times for unavailable replications
($w\to\infty$, $w/T\to0$); this \emph{requires} the conditional moment to be
approximately stable over the window --- local stationarity, recurrence in a
low-dimensional state, or another explicit smoothness condition
(Theorem~\ref{thm:bbconsistency}). It is exactly this route that the lag-smoother
limit of Section~\ref{sec:limit} constrains when $\It$ switches at the sampling
scale.

\textbf{Route 3: model-based single-path.} A correctly specified, identifiable
parametric or semiparametric model --- a GARCH or a finite-state filter ---
recovers the path-specific $\It$ from a single \emph{stationary, ergodic} path
with \emph{no} local-stationarity assumption: the whole series identifies the
finite-dimensional parameter and the fitted filter returns $\It$ pointwise
(Proposition~\ref{prop:achieve}). Local stationarity is thus a property of Route~2,
not a precondition for every single-path estimator.

This split organizes the menu: the windowed surrogates and the block bootstrap
are Route-2 estimators, while the conditional-variance and state-space estimators
are Route-3 estimators.

\subsection{Every estimator is a predictor plus a second-moment estimator}
\label{sec:decomp}

Each estimator of $\It$ has two ingredients: a one-step predictor, which forms
the forecast revision, and a conditional-second-moment estimator, which turns
that revision into $\It$. The following elementary bound separates their errors
and governs the whole menu.

\begin{lemma}[Error decomposition]
\label{lem:decomp}
For $s\in\{t-1,t\}$ let $\hat m_s$ be a square-integrable predictor of
$X_{t+1}$, with error $\varepsilon_s=\hat m_s-m_s$, $m_s=\E[X_{t+1}\mid\F_s]$.
Let $\Ihat_t$ be \emph{any} estimator built from the estimated revision
$\hat D_t=\hat m_t-\hat m_{t-1}$, and write its second-moment error as
$\delta_t=\bigl|\Ihat_t-(\E[\hat D_t^2\mid\F_{t-1}])^{1/2}\bigr|$ and the
conditional norm as $\norm{\cdot}_{t-1}=(\E[\cdot^2\mid\F_{t-1}])^{1/2}$. Then
\begin{equation}
  \bigl|\Ihat_t-\It\bigr|
  \;\le\; \delta_t \;+\; \norm{\varepsilon_t}_{t-1}
          \;+\; \norm{\varepsilon_{t-1}}_{t-1}.
  \label{eq:decomp}
\end{equation}
\end{lemma}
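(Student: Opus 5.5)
The plan is a triangle inequality around the intermediate quantity $\norm{\hat D_t}_{t-1}=(\E[\hat D_t^2\mid\F_{t-1}])^{1/2}$, followed by Minkowski's inequality for the conditional norm. Write
\[
  \bigl|\Ihat_t-\It\bigr|\le \bigl|\Ihat_t-\norm{\hat D_t}_{t-1}\bigr| + \bigl|\norm{\hat D_t}_{t-1}-\norm{D_t}_{t-1}\bigr| .
\]
The first term is $\delta_t$ by definition. By Definition~\ref{def:it-def-anchor-placeholder}\unskip\ignorespaces
\ifdefined\undefinedmacro\fi
the second term compares $\norm{\hat D_t}_{t-1}$ with $\It=\norm{D_t}_{t-1}$ (Definition~\ref{def:It}).

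For the second term, I will first check that $\norm{\cdot}_{t-1}$ is a seminorm almost surely on square-integrable variables. Homogeneity is immediate for $\F_{t-1}$-free constants. For subadditivity, expand
\[
\E[(U+V)^2\mid\F_{t-1}] = \E[U^2\mid\F_{t-1}] + 2\,\E[UV\mid\F_{t-1}] + \E[V^2\mid\F_{t-1}],
\]
and apply the conditional Cauchy--Schwarz inequality $|\E[UV\mid\F_{t-1}]|\le\norm{U}_{t-1}\norm{V}_{t-1}$ a.s. This gives $\norm{U+V}_{t-1}\le\norm{U}_{t-1}+\norm{V}_{t-1}$ a.s. The reverse triangle inequality $\bigl|\norm{U}_{t-1}-\norm{V}_{t-1}\bigr|\le\norm{U-V}_{t-1}$ then follows in the usual way.

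Next I apply this with $U=\hat D_t$, $V=D_t$. The difference is $\hat D_t-D_t=(\hat m_t-m_t)-(\hat m_{t-1}-m_{t-1})=\varepsilon_t-\varepsilon_{t-1}$. One more application of subadditivity gives $\norm{\varepsilon_t-\varepsilon_{t-1}}_{t-1}\le\norm{\varepsilon_t}_{t-1}+\norm{\varepsilon_{t-1}}_{t-1}$. Combining the two displays yields \eqref{eq:decomp}. Square integrability of $\hat m_s$ and of $X_{t+1}$, and hence of $m_s$, makes every conditional expectation finite a.s.

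The only step needing care is the conditional Cauchy--Schwarz inequality. Because conditional expectations are defined only up to null sets, I will prove it via the conditional version of $\E[(U-\lambda V)^2\mid\F_{t-1}]\ge0$, first for all rational $\lambda$ simultaneously on a single full-measure set, then extended to all real $\lambda$ by continuity. Optimizing over $\lambda$ pointwise, with the case $\norm{V}_{t-1}=0$ handled separately, gives the inequality. Everything else is algebra.
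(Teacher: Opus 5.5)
Your proof is correct and follows the paper's argument: the triangle inequality around the intermediate quantity $\norm{\hat D_t}_{t-1}$, then the reverse triangle inequality for the conditional $L^2$ norm applied to $\hat D_t-D_t=\varepsilon_t-\varepsilon_{t-1}$, then Minkowski once more. The only differences are that you prove the conditional Minkowski inequality yourself (via an a.s.\ conditional Cauchy--Schwarz argument over rational $\lambda$), which the paper takes as standard, and that the garbled cross-reference in your second paragraph should simply cite the definition of $\It$.
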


\begin{proof}
$\hat D_t-D_t=\varepsilon_t-\varepsilon_{t-1}$, so by the reverse triangle
inequality for the conditional $L^2$ norm,
$\bigl|\norm{\hat D_t}_{t-1}-\norm{D_t}_{t-1}\bigr|
 \le\norm{\hat D_t-D_t}_{t-1}
 \le\norm{\varepsilon_t}_{t-1}+\norm{\varepsilon_{t-1}}_{t-1}$;
since $\It=\norm{D_t}_{t-1}$, adding $\delta_t$ through the triangle inequality
gives~\eqref{eq:decomp}.
\end{proof}

\noindent The bound uses only square-integrability of $\hat m_{t-1},\hat m_t$;
$\F_s$-measurability is needed separately for the \emph{online}, predictable
reading, and the offline full-sample fits used below satisfy it only after
conditioning on the fitted parameters (the offline/online distinction made
precise in Section~\ref{sec:ssm}). The two terms have different remedies, and
separating them is what
makes the problem tractable. The second-moment error $\delta_t$ is what the
estimators below are engineered to control (and what the bootstrap consistency
theorem addresses); crucially, it depends not only on sample size but on how fast
$\It$ itself varies --- a windowed or block estimator of a conditional second
moment trades bias against variance through its bandwidth, and its bias grows
with the rate of change of that moment, so it can track a slowly-varying $\It$
but not one that jumps with a latent state. The predictor terms
$\norm{\varepsilon_s}_{t-1}$ are inherited from the one-step predictor: if the
predictor is $L^2$-consistent they vanish, but a predictor restricted to a class
that cannot represent $m_s$ --- an affine predictor under a nonlinear conditional
mean --- leaves a floor
$\inf\norm{\varepsilon_s}\ge\operatorname{dist}_{L^2}(m_s,\mathcal{H})>0$. Both
terms can bind, and \emph{which} one does is an empirical question we answer
directly: re-running each estimator with the \emph{true} conditional mean
supplied (an oracle-predictor control, Section~\ref{sec:sim}) sets
$\varepsilon_s=0$ and isolates $\delta_t$. On the state-driven processes that
control barely helps --- so the binding term is $\delta_t$, not the predictor ---
and this dictates the remedy: not a better predictor, but a second-moment
estimator whose \emph{model} is matched to the mechanism generating $\It$, which
is what the conditional-variance and state-space estimators below supply.

\subsection{A Fundamental Limit for Local Estimators}
\label{sec:limit}

The oracle-predictor finding says the binding error is $\delta_t$, the
second-moment-tracking term. This subsection makes that precise: for the entire
class of estimators that track $\It$ by \emph{externally weighted local
averaging} of the realised second moment --- including the windowed RMS, the
trailing-RMS implementation of the innovation-scaled surrogate, and the point
estimate underlying the block bootstrap --- there is an irreducible
bias--variance floor that, for the predictable sampling-scale switching
construction below, forbids consistency at every bandwidth. Write
$v_t:=\It^2=\E[D_t^2\mid\F_{t-1}]$ for the estimand.

\begin{definition}[Externally tuned lag-only convex smoother]
\label{def:local}
With $W_t=\{t-w,\dots,t-1\}$, an estimator
$\hat v_t=\Ihat_t^2=\sum_{s\in W_t}a_{t,s}D_s^2$ is an \emph{externally tuned
lag-only convex smoother} when $a_{t,s}\ge0$, $\sum_s a_{t,s}=1$, and the weights
are deterministic or measurable with respect to an external tuning field
$\mathcal T$ independent of the evaluation variance path and innovations (for the
switching corollary the weights may be $\F_{t-2}\vee\mathcal T$-measurable, but must
not use the newly observed state at time $t-1$); its effective width is
$w_t^{\mathrm{eff}}=(\sum_s a_{t,s}^2)^{-1}\le w$. Uniform windows, fixed kernels,
and the point estimate underlying the rolling block bootstrap are covered, as are
both lightweight surrogates (A and B) when their bandwidth is fixed externally. We do
\emph{not} claim the theorem for weights learned from the same $D_s^2$ --- their
dependence on the innovation terms would invalidate the variance decomposition
below --- nor for estimators that infer a latent state nonlinearly
(Proposition~\ref{prop:achieve} shows those can escape the bound).
\end{definition}

\noindent We take the variance path $\{v_r\}$ to be \emph{exogenous}: it is
generated independently of the standardised innovations $\{\xi_s\}$ in the revision
model $D_s=\sqrt{v_s}\,\xi_s$, with $\xi_s$ i.i.d., mean zero, unit variance, of
finite fourth moment, and independent of the tuning field $\mathcal T$. Then
$\E[D_s^2\mid\F_{s-1}]=v_s$, the conditional variance is nondegenerate and
two-sided, $\kappa_0 v_s^2\le\Var(D_s^2\mid\F_{s-1})\le\kappa_1 v_s^2$
($0<\kappa_0\le\kappa_1$; $\kappa_0=\kappa_1=2$ for conditionally Gaussian $D_s$), and ---
the property the proof uses --- conditioning on the whole path $\{v_r\}$ and
$\mathcal T$ leaves
$\E[\eta_s\mid\{v_r\},\mathcal T]=0$ with the excess terms $\eta_s:=D_s^2-v_s$ conditionally
uncorrelated. Exogeneity holds for the oracle revision model and for the
observed-switching construction of Corollary~\ref{cor:dichotomy}(ii) (there
$\zeta\perp\xi$). It \emph{fails} when the variance path is itself a function of
past squared revisions, as in GARCH, where $D_s^2$ is recoverable from $\{v_r\}$
(e.g.\ $v_{s+1}=\omega+\alpha D_s^2+\beta v_s$); the theorem then bounds the
lag-only \emph{class} rather than those processes directly, and the empirical wall
on them (Section~\ref{sec:sim}) is documented numerically --- consistent with, but
not implied by, the theorem. Let $0<v_{\min}\le v_s\le v_{\max}$.

\begin{theorem}[Bias--variance identity for lag-only smoothing]
\label{thm:floor}
For every estimator of Definition~\ref{def:local}, conditional on the variance
path $\{v_r\}$ and the tuning field $\mathcal T$,
\[
  \E\!\big[(\hat v_t-v_t)^2\mid\{v_r\},\mathcal T\big]
  = B_t^2 + \sum_{s\in W_t} a_{t,s}^2\,\Var(D_s^2\mid\F_{s-1}),
  \qquad B_t:=\sum_{s\in W_t} a_{t,s}(v_s-v_t),
\]
and consequently
\[
  B_t^2 + \frac{\kappa_0 v_{\min}^2}{w_t^{\mathrm{eff}}}
  \;\le\;
  \E\!\big[(\hat v_t-v_t)^2\mid\{v_r\},\mathcal T\big]
  \;\le\;
  B_t^2 + \frac{\kappa_1 v_{\max}^2}{w_t^{\mathrm{eff}}}.
\]
\end{theorem}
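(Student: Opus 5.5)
The plan is a direct bias--variance expansion, carried out conditionally on $\mathcal G:=\sigma(\{v_r\},\mathcal T)$. Because $\sum_s a_{t,s}=1$, I will write $D_s^2=v_s+\eta_s$ and decompose
\[
  \hat v_t-v_t=\sum_{s\in W_t}a_{t,s}(v_s-v_t)+\sum_{s\in W_t}a_{t,s}\eta_s
  = B_t+N_t,\qquad N_t:=\sum_{s\in W_t}a_{t,s}\eta_s .
\]
Under Definition~\ref{def:local} the weights are deterministic or $\mathcal T$-measurable, so both $B_t$ and the $a_{t,s}$ are $\mathcal G$-measurable. Squaring and taking $\E[\cdot\mid\mathcal G]$ gives
\[
  \E[(\hat v_t-v_t)^2\mid\mathcal G]=B_t^2+2B_t\,\E[N_t\mid\mathcal G]+\E[N_t^2\mid\mathcal G].
\]
The cross term vanishes because $\E[\eta_s\mid\mathcal G]=0$ by the exogeneity assumption. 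I will then expand $\E[N_t^2\mid\mathcal G]=\sum_{s,s'}a_{t,s}a_{t,s'}\E[\eta_s\eta_{s'}\mid\mathcal G]$. The off-diagonal terms drop by the stated conditional uncorrelatedness of the $\eta_s$, which leaves $\sum_s a_{t,s}^2\E[\eta_s^2\mid\mathcal G]$.

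The main obstacle is identifying $\E[\eta_s^2\mid\mathcal G]$ with $\Var(D_s^2\mid\F_{s-1})$; this replaces conditioning on $\mathcal G$ by conditioning on $\F_{s-1}$. In the exogenous revision model $D_s=\sqrt{v_s}\,\xi_s$, I plan to compute both sides explicitly:
\[
  \eta_s=v_s(\xi_s^2-1),\qquad \E[\eta_s^2\mid\mathcal G]=v_s^2\,\E[(\xi_s^2-1)^2]=v_s^2(\mu_4-1),
\]
where $\mu_4$ is the fourth moment of $\xi$. This uses $\xi_s$ independent of $(\{v_r\},\mathcal T)$. The same expression is $\Var(D_s^2\mid\F_{s-1})$, since $v_s$ is $\F_{s-1}$-measurable and $\xi_s$ is independent of $\F_{s-1}$. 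I would flag this step as the place where exogeneity is essential: in GARCH, $\mathcal G$ would reveal $\xi_s$ and the identity fails. That concludes the identity in the theorem.

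For the two-sided bound, I will insert $\kappa_0v_s^2\le\Var(D_s^2\mid\F_{s-1})\le\kappa_1v_s^2$ together with $v_{\min}\le v_s\le v_{\max}$. This gives
\[
  \kappa_0v_{\min}^2\sum_s a_{t,s}^2\;\le\;\sum_s a_{t,s}^2\Var(D_s^2\mid\F_{s-1})\;\le\;\kappa_1v_{\max}^2\sum_s a_{t,s}^2 .
\]
Finally, I will use $\sum_s a_{t,s}^2=1/w_t^{\mathrm{eff}}$ by definition to obtain the stated bounds. For the switching-corollary variant with $\F_{t-2}\vee\mathcal T$-measurable weights, I would remark that the same argument goes through after conditioning on the weights. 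This is valid because those weights do not depend on $\eta_{t-1}$, and they depend on earlier $\eta_s$ only through information that the corollary's construction keeps independent. I would defer those details to the corollary itself.
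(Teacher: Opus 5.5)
Your proof is correct and follows the paper's argument. You split $\hat v_t-v_t=B_t+\sum_s a_{t,s}\eta_s$ using $\sum_s a_{t,s}=1$, remove the cross term via $\E[\eta_s\mid\{v_r\},\mathcal T]=0$ and the off-diagonal terms via conditional uncorrelatedness, and then sandwich with the $\kappa_0,\kappa_1$ bounds, $v_{\min}\le v_s\le v_{\max}$ and $\sum_s a_{t,s}^2=1/w_t^{\mathrm{eff}}$. Your explicit check that $\E[\eta_s^2\mid\{v_r\},\mathcal T]=v_s^2(\E\xi_s^4-1)=\Var(D_s^2\mid\F_{s-1})$ in the model $D_s=\sqrt{v_s}\,\xi_s$ fills in a step the paper's appendix only asserts, and it correctly pinpoints where exogeneity is used.

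One caution concerns your closing aside. The identity does \emph{not} extend to weights that are merely $\F_{t-2}\vee\mathcal T$-measurable. Such weights may depend on $\xi_s$, and hence on $\eta_s$, for $s\le t-2$ inside $W_t$. For example, weights concentrated on the smallest past $D_s^2$ bias $\sum_s a_{t,s}\eta_s$ downward. So the corollary's construction does not keep the weights independent of the $\eta_s$, and $B_t$ is no longer $\sigma(\{v_r\},\mathcal T)$-measurable. You should drop that sentence. The switching floor comes instead from the conditional-expectation (MMSE) argument, most cleanly applied to $\hat v_t$ itself. Indeed, $\hat v_t$ is measurable with respect to $\sigma(\zeta_r:r\le t-2,\ \xi_r:r\le t-1)\vee\mathcal T$, and given this $\sigma$-field $v_t=\mu+\delta\zeta_{t-1}$ has conditional variance $\delta^2(1-\rho^2)$.
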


\noindent The two terms pull against each other through the bandwidth: shrinking
$w_t^{\mathrm{eff}}$ to cut the smoothing bias $B_t$ inflates the variance, and
vice versa. Whether they can be made small \emph{together} depends entirely on how
fast $v_s$ moves.

\begin{corollary}[Slow variation and predictable sampling-scale switching]
\label{cor:dichotomy}
\emph{(i) Slow variation.} If $v_s=g(s/T)$ with $g$ Lipschitz and the uniform
window is used, then $B_t=O(w/T)$ and
$\E[(\hat v_t-v_t)^2]=O(w^2/T^2+1/w)$; choosing $w\asymp T^{2/3}$ gives
mean-squared error $O(T^{-2/3})$ for $v_t=\It^2$. Since $v_t$ is bounded away from
zero, the delta method gives the same order for $\Ihat_t$ and root-mean-square
error $O(T^{-1/3})$ for $\It$. This is achievability for the uniform window, not a
claim that \emph{every} convex smoother attains the rate.

\emph{(ii) Predictable switching.} Let $\{\zeta_t\}$ be an \emph{observed}
stationary symmetric Markov chain on $\{-1,1\}$ with
$\E(\zeta_t\mid\zeta_{t-1})=\rho\zeta_{t-1}$, $|\rho|<1$; let $\xi_t$ be i.i.d.\
(mean $0$, variance $1$), independent of $\zeta$; and set
$v_t=\mu+\delta\zeta_{t-1}$, $D_t=\sqrt{v_t}\,\xi_t$ ($\mu>\delta>0$), on
$\F_t=\sigma(\zeta_r,\xi_r:r\le t)$. Then $v_t$ is $\F_{t-1}$-measurable, so the
target is genuinely predictable; yet for every externally tuned lag-only convex
smoother, $\E[(\hat v_t-v_t)^2]\ge\delta^2(1-\rho^2)>0$ for all $T$ and all
bandwidths.
The restricted class cannot consistently track even a \emph{predictable} variance
path that switches at the sampling scale.
\end{corollary}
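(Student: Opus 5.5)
For part (i) I would apply the upper bound of Theorem~\ref{thm:floor} with uniform weights $a_{t,s}=1/w$, so that $w_t^{\mathrm{eff}}=w$. In this design the variance path $v_s=g(s/T)$ is deterministic, hence trivially exogenous, and the conditional expectation in the theorem equals the unconditional one. If $L$ is the Lipschitz constant of $g$, the bias satisfies $|B_t|\le w^{-1}\sum_{s=t-w}^{t-1}|g(s/T)-g(t/T)|\le L\,w/T$. The theorem then gives $\E[(\hat v_t-v_t)^2]\le L^2w^2/T^2+\kappa_1v_{\max}^2/w$. Balancing the two terms with $w\asymp T^{2/3}$ makes both of order $T^{-2/3}$.

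To pass from $v_t$ to $\It$ I would not use an asymptotic delta method. Instead I would use the exact inequality $|\sqrt a-\sqrt b|=|a-b|/(\sqrt a+\sqrt b)\le|a-b|/\sqrt{v_{\min}}$, valid for $a\ge0$ and $b=v_t\ge v_{\min}$. It gives $\E[(\Ihat_t-\It)^2]\le v_{\min}^{-1}\E[(\hat v_t-v_t)^2]=O(T^{-2/3})$, so the root-mean-square error is $O(T^{-1/3})$. Boundary indices with $t\le w$ are excluded, as is implicit in the statement.

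For part (ii) I would not invoke the theorem, because the corollary lets the weights depend on $\F_{t-2}$, which breaks exogeneity. I would give a direct $L^2$-projection argument instead. Let $\mathcal H:=\F_{t-2}\vee\sigma(\xi_{t-1})\vee\mathcal T$. Every ingredient of $\hat v_t$ is $\mathcal H$-measurable:
\begin{itemize}[nosep]
\item the weights are $\F_{t-2}\vee\mathcal T$-measurable by assumption;
\item for $s\le t-2$, $D_s^2$ is $\F_{t-2}$-measurable;
\item $D_{t-1}^2=(\mu+\delta\zeta_{t-2})\xi_{t-1}^2$ involves only $\zeta_{t-2}$ and $\xi_{t-1}$.
\end{itemize}
Since the conditional mean is the best $\mathcal H$-measurable predictor in $L^2$,
\[
\E[(\hat v_t-v_t)^2]\ \ge\ \E\big[\Var(v_t\mid\mathcal H)\big]=\delta^2\,\E\big[\Var(\zeta_{t-1}\mid\mathcal H)\big].
\]
This bound holds for any $\mathcal H$-measurable estimator, convex or not.

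The key step, and the one needing care, is the identity $\Var(\zeta_{t-1}\mid\mathcal H)=\Var(\zeta_{t-1}\mid\zeta_{t-2})$. It requires $\zeta_{t-1}$ to be conditionally independent of $(\xi_{r})_{r\le t-1}$, $\mathcal T$ and $(\zeta_r)_{r\le t-3}$ given $\zeta_{t-2}$. This follows from three facts: $\xi\perp\zeta$, $\mathcal T$ is independent of the variance path and the innovations, and $\zeta$ has the Markov property. I would verify it by factorizing the joint law of $(\zeta,\xi,\mathcal T)$ as a product of the law of $\zeta$, the i.i.d.\ law of $\xi$, and the law of $\mathcal T$.

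Given that identity, the symmetric two-state chain has $\zeta_{t-1}^2=1$ and $\E[\zeta_{t-1}\mid\zeta_{t-2}]=\rho\zeta_{t-2}$. Hence
\[
\Var(\zeta_{t-1}\mid\zeta_{t-2})=1-\rho^2\zeta_{t-2}^2=1-\rho^2,
\]
and therefore $\E[(\hat v_t-v_t)^2]\ge\delta^2(1-\rho^2)$. This lower bound does not depend on $T$ or on the bandwidth, and it is strictly positive because $|\rho|<1$.
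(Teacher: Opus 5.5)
Your proposal is correct. Part (i) follows the paper's argument. Your global bound $|\sqrt a-\sqrt b|\le|a-b|/\sqrt{v_{\min}}$ is a non-asymptotic version of the paper's localisation step, which starts from the same identity.

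Part (ii) takes a genuinely different route. The paper stays inside Theorem~\ref{thm:floor}. It writes $B_t=\delta\bigl(\sum_s a_{t,s}\zeta_{s-1}-\zeta_{t-1}\bigr)$, lower-bounds $\E B_t^2$ by projecting onto $\F_{t-2}\vee\mathcal T$, and then adds the nonnegative innovation-variance term from the bias--variance identity. You bypass that identity and instead project the whole estimator onto $\mathcal H=\F_{t-2}\vee\sigma(\xi_{t-1})\vee\mathcal T$.

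Your route is more general. The identity requires the cross terms $\E[B_t\,a_{t,s}\eta_s]$ to vanish. That holds when the weights are $\mathcal T$-measurable, which the paper's proof explicitly assumes. It can fail for $s\le t-2$ once the weights may depend on $\xi_s$ through $\F_{t-2}$, as Definition~\ref{def:local} permits for this corollary. So what breaks there is this decorrelation, not exogeneity of $\{v_r\}$ as such. Your argument covers that case and indeed every $\mathcal H$-measurable estimator, convex or not. It isolates the floor as a pure one-step information delay, using only the same joint independence of $\mathcal T$ from $(\zeta,\xi)$ that the paper relies on.

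The paper's route, when its hypotheses hold, is shorter given the theorem. It also keeps the innovation-variance term $\sum_s a_{t,s}^2\Var(D_s^2\mid\F_{s-1})$ in the lower bound, which your projection discards. That loss is harmless here, since the corollary only asserts $\delta^2(1-\rho^2)$.
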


\noindent The floor comes from the one-step delay in lag-only squared-revision
averages: because $s\le t-1$, the estimable part $\sum_s a_{t,s}\zeta_{s-1}$ is
$\F_{t-2}\vee\mathcal T$-measurable, whereas $v_t$ depends on the newly observed $\zeta_{t-1}$,
so the mean-squared error is at least
$\delta^2\,\E\{\Var(\zeta_{t-1}\mid\F_{t-2})\}=\delta^2(1-\rho^2)$. This is a lower
bound for this specific class --- \emph{not} a claim that every local or
nonparametric method is impossible.

Figure~\ref{fig:dichotomy} confirms both parts of Corollary~\ref{cor:dichotomy}
numerically in the oracle model $D_s=\sqrt{v_s}\,\xi_s$. Panel~(a) plots the
trailing-window mean-squared error against bandwidth at $T=1600$: under slow
variation a bandwidth trades bias against variance and drives the error down,
whereas under predictable sampling-scale switching the error settles onto the
floor $\delta^2(1-\rho^2)$ at every bandwidth. Panel~(b) plots the optimal-window
error against series length: the slow path decays at the predicted $T^{-2/3}$ order
(fitted slope $-0.62$), while the switching path is essentially flat (slope
$-0.02$) --- the empirical signature of a floor that neither bandwidth nor sample
size removes.

\begin{figure}[t]
\centering
\includegraphics[width=0.78\linewidth]{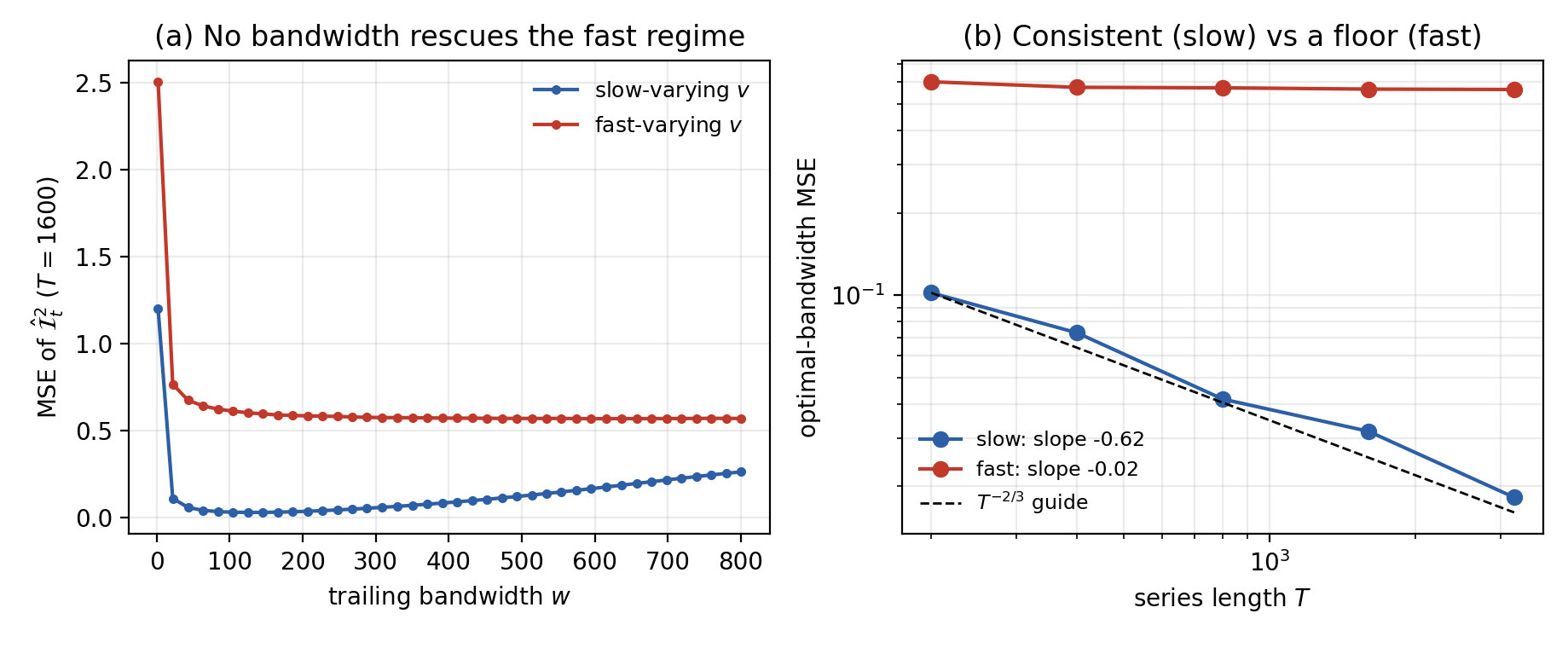}
\caption{Illustration of Corollary~\ref{cor:dichotomy} in the oracle revision
model $D_s=\sqrt{v_s}\,\xi_s$. \emph{(a)} MSE of the uniform trailing-window
estimator versus bandwidth at $T=1600$: for a slowly-varying $v$ a bandwidth
drives the error down, but for a predictable sampling-scale switching $v$ the
error floors regardless of bandwidth. \emph{(b)} Optimal-window MSE versus series
length: the slow path follows the predicted $T^{-2/3}$ order (fitted slope
$-0.62$) while the switching path retains a floor (slope $-0.02$). The figure
illustrates a limit of lag-only averaging, not an impossibility for nonlinear or
state-adaptive estimators.}
\label{fig:dichotomy}
\end{figure}

\begin{remark}[How a matched model can escape]
\label{rem:escape}
Theorem~\ref{thm:floor} applies only to estimators linear in past squared
revisions with externally chosen weights. An estimator that reads the current
state --- the observed $\zeta_{t-1}$ in the switching construction, or a filtered
distribution in latent-state applications (Sections~\ref{sec:cvestimator},
\ref{sec:ssm}) --- is a nonlinear functional and lies outside the theorem. Its
success is not automatic: Proposition~\ref{prop:achieve} requires correct
specification, and Supplement~S2 quantifies the deterioration under
misspecification. The lower-bound and achievability constructions are related but
not identical; a matched lower-and-upper result for one latent-state model remains
open.
\end{remark}

\subsection{The Block-Bootstrap Estimator}

The block bootstrap \citep{kunsch1989jackknife,politis1994stationary} targets the
\emph{sampling distribution} of the trailing window second moment. Because
averaging over resamples estimates the conditional second moment of
Definition~\ref{def:It} rather than a single realized revision, the bootstrap is a
natural estimator when the target is a moment; its role, made precise below, is
uncertainty quantification rather than a distinct point estimate.

\begin{algorithm}[ht]
\caption{Rolling Block-Bootstrap Estimation of $\It$ (point estimate $+$ band)}
\label{alg:bootstrap}
\begin{algorithmic}[1]
\Require Series $x_1,\ldots,x_T$; window $w$; block length $\ell$; $B$ replications
\State Fit the one-step predictor \emph{once} on the whole series; form residuals
  $\hat e_s=x_s-\hat m(\F_{s-1})$ and slope $\hat\phi$, so that the forecast
  revision is $\hat D_s=\hat\phi\,\hat e_s$
  \Comment{fitting is \emph{not} redone inside the resampling loop}
\For{$t=w+1,\ldots,T$}
  \State $W_t \leftarrow \{t-w,\ldots,t-1\}$
  \For{$b = 1,\ldots,B$}
    \State Draw a circular block-bootstrap resample $\{\hat e^{(b)}_s\}$ of
      $\{\hat e_s: s\in W_t\}$ with block length $\ell$
    \State $S_b \leftarrow \hat\phi^{2}\,w^{-1}\sum_{s\in W_t}(\hat e^{(b)}_s)^2$
  \EndFor
  \State $\hat v_t^{\mathrm{boot}} \leftarrow B^{-1}\sum_b S_b$;\quad
    $[L_v,U_v]\leftarrow$ empirical quantiles of $\{S_b\}$
    \Comment{variance ($\It^2$) scale}
  \State $\Ihat_t^{\mathrm{boot}} \leftarrow \sqrt{\hat v_t^{\mathrm{boot}}}$;\quad
    $[L_I,U_I]\leftarrow[\sqrt{L_v},\,\sqrt{U_v}]$
    \Comment{scale of $\It$}
\EndFor
\Ensure $\hat v_t^{\mathrm{boot}}$ with band $[L_v,U_v]$ (variance scale) and
  $\Ihat_t^{\mathrm{boot}}$ with band $[L_I,U_I]$ (scale of $\It$) for each $t$
\end{algorithmic}
\end{algorithm}

\noindent The predictor is fitted once and then held fixed: block-resampling the
\emph{residuals} within the trailing window preserves the fitted revision
structure, whereas re-fitting on a block-shuffled series would corrupt
$\hat\phi$. (The fit is \emph{offline}, on the full sample, so this is
retrospective reconstruction, kept separate from the online,
$\F_{t-1}$-measurable use motivated in the introduction.) As $B\to\infty$ the point
estimate $(\Ihat_t^{\mathrm{boot}})^2$ converges to the trailing window second
moment $\hat\phi^2 w^{-1}\sum_{s\in W_t}\hat e_s^2$ --- so the bootstrap is
\emph{not} a distinct point estimator: additional replication reduces
Monte-Carlo error and supports uncertainty quantification, but cannot remove the
temporal smoothing bias inherited from the underlying local statistic. This is the
honest reading of the ``accurate, expensive'' tier: pay for it when a sampling
distribution or interval is wanted, not for point accuracy.

\begin{theorem}[Consistency and bootstrap validity in the locally-stationary regime]
\label{thm:bbconsistency}
  Consider a triangular array $(X_{t,T})$ whose forecast-revision conditional
  second moment is \emph{locally stationary}: there is a Lipschitz
  $G:[0,1]\to(0,\infty)$ with $\E[D_{t,T}^2\mid\F_{t-1}]=G(t/T)$, so that
  $\It=\sqrt{G(t/T)}$. Assume:
  \begin{enumerate}[label=\emph{(A\arabic*)},leftmargin=*,itemsep=0pt,topsep=2pt]
    \item $\sup_{t,T}\E|D_{t,T}|^{4+\delta}<\infty$ for some $\delta>0$;
    \item $(D_{t,T}^2)$ is $\alpha$-mixing with
      $\sum_{k\ge1}\alpha(k)^{\delta/(4+\delta)}<\infty$, a standard dependence
      condition used in block-bootstrap central limit theorems
      \citep{kunsch1989jackknife,lahiri2003resampling};
    \item the fitted one-step predictor is uniformly $L^2$-consistent over the
      estimation window,
      $\max_{s\in W_t}\norm{\hat m_s-m_s}_{L^2}=o(1)$.
  \end{enumerate}
  We state the result for the linear residual construction $\hat D_s=\hat\phi\,
  \hat e_s$ of Algorithm~\ref{alg:bootstrap}; a general $\hat D_s$ requires the
  analogous conditions on $\hat D_s^2$.
  Write $\hat v_t^{\mathrm{loc}}=w^{-1}\sum_{s\in W_t}\hat D_s^2$ for the local
  point estimate and $\bar v_{t,T}=w^{-1}\sum_{s\in W_t}G(s/T)$. With trailing
  window $W_t=\{t-w,\dots,t-1\}$ and block length $\ell$ satisfying $w\to\infty$,
  $w/T\to0$, $\ell\to\infty$, $\ell/w\to0$, $B\to\infty$, and at each rescaled time
  $u=t/T\in(0,1)$:
  \emph{(i) Consistency.} $\hat v_t^{\mathrm{loc}}\xrightarrow{p}G(u)$, and hence
  $\Ihat_{t,T}^{\mathrm{boot}}\xrightarrow{p}\It$.
  \emph{(ii) Bootstrap validity (conditional on the fitted predictor).} Suppose, in
  addition, that the mixing rate yields the moving-block bootstrap central limit
  theorem for the local mean and that predictor estimation is $\sqrt w$-negligible,
  $\sqrt w\,\max_{s\in W_t}\norm{\hat m_s-m_s}_{L^2}\to0$ (so the predictor does not
  enter at the $\sqrt w$ scale). Then, treating the fitted $(\hat\phi,\hat e_s)$ as
  fixed under resampling and writing $\hat v_t^*$ for the bootstrap statistic,
  \[
  \begin{aligned}
    \sup_x\Big|&
      \Pr\!{}^*\!\big\{\sqrt w\,(\hat v_t^*-\hat v_t^{\mathrm{loc}})\le x\big\}\\
      &{}-\Pr\big\{\sqrt w\,(\hat v_t^{\mathrm{loc}}-\bar v_{t,T})\le x\big\}
    \Big| \xrightarrow{p}0 .
  \end{aligned}
  \]
\end{theorem}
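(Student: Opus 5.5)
The plan splits $\hat v_t^{\mathrm{loc}}-G(u)$ into three pieces and treats each separately:
\[
\hat v_t^{\mathrm{loc}}-G(u)
=\underbrace{w^{-1}\textstyle\sum_{s\in W_t}(\hat D_s^2-D_s^2)}_{\text{predictor}}
+\underbrace{w^{-1}\textstyle\sum_{s\in W_t}\eta_s}_{\text{noise}}
+\underbrace{(\bar v_{t,T}-G(u))}_{\text{bias}},
\qquad \eta_s:=D_s^2-G(s/T).
\]
\emph{Bias.} Since $G$ is Lipschitz and $|s/T-u|\le w/T$ on $W_t$, the bias term is $O(w/T)\to0$. This is the same computation as Corollary~\ref{cor:dichotomy}(i).

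\emph{Noise.} The $\eta_s$ are centred, because $\E[\eta_s\mid\F_{s-1}]=0$. They have bounded $(2+\delta/2)$-moments by (A1) and are $\alpha$-mixing by (A2). Davydov's covariance inequality, with the summability condition in (A2), then gives $\Var(w^{-1}\sum\eta_s)=O(1/w)$. In fact the martingale-difference structure already makes the $\eta_s$ uncorrelated, so $O(1/w)$ follows directly, and Chebyshev gives convergence in probability to zero.

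\emph{Predictor.} This is exactly the setting of Lemma~\ref{lem:decomp}: $\hat D_s-D_s=\varepsilon_s-\varepsilon_{s-1}$, where the errors refer to the two forecasts of $X_{s+1}$. Write $\hat D_s^2-D_s^2=(\hat D_s-D_s)(\hat D_s+D_s)$ and apply Cauchy--Schwarz. This gives
\[
\E\bigl|w^{-1}\textstyle\sum(\hat D_s^2-D_s^2)\bigr|\le \max_s\norm{\hat D_s-D_s}_{L^2}\bigl(2\max_s\norm{D_s}_{L^2}+\max_s\norm{\hat D_s-D_s}_{L^2}\bigr),
\]
which is $o(1)$ by (A3) and (A1). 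For the linear construction the error decomposes through $(\hat\phi-\phi)e_s+\hat\phi(\hat e_s-e_s)$, and (A3) controls both parts. Markov's inequality then finishes the predictor term.

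Part (i) follows from these three bounds. The bootstrap average $\hat v_t^{\mathrm{boot}}$ differs from $\hat v_t^{\mathrm{loc}}$ by Monte Carlo error of order $O_p(B^{-1/2})$. Under circular resampling the bootstrap mean equals $\hat v_t^{\mathrm{loc}}$ exactly. Finally, continuity of the square root at $G(u)>0$ yields $\Ihat_{t,T}^{\mathrm{boot}}\to_p\It$.

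For part (ii), I would transfer to the oracle statistic $\tilde v_t=w^{-1}\sum D_s^2$ on both sides.

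On the real-world side, the bound above scaled by $\sqrt w$ is $\sqrt w\max\norm{\hat m_s-m_s}\cdot O(1)\to0$ by the extra assumption. So $\sqrt w(\hat v^{\mathrm{loc}}_t-\bar v_{t,T})$ and $\sqrt w(\tilde v_t-\bar v_{t,T})$ share a limit law. That law is a CLT for a mixing triangular array with slowly varying mean. Its asymptotic variance is the long-run variance $\sigma^2(u)$ of $D^2$ at rescaled time $u$, and continuity of that variance in $u$ follows from local stationarity.

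On the bootstrap side, I would invoke the assumed moving-block bootstrap CLT. Conditionally on the data, $\sqrt w(\hat v^*_t-\hat v^{\mathrm{loc}}_t)$ is approximately normal with variance equal to the block variance estimator $\hat\sigma^2_\ell=\ell^{-1}\sum_{\text{blocks}}\cdots$ built from the $\hat D_s^2$. The key step is then $\hat\sigma^2_\ell\to_p\sigma^2(u)$. I would prove it in two moves. First, replace $\hat D_s^2$ by $D_s^2$ at cost $o_p(1)$, using the same Cauchy--Schwarz bound applied within blocks of size $\ell$. Second, use the standard consistency of the Bartlett-type block variance estimator under $\ell\to\infty$, $\ell/w\to0$ and (A1)--(A2). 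Centring the blocks at the local mean absorbs the drift of $G$ within the window, whose contribution is $O(\ell w/T^2\cdot\ell)$, negligible. Both sides being asymptotically $N(0,\sigma^2(u))$, Pólya's theorem (continuity of the normal cdf) upgrades pointwise convergence to the sup-norm statement.

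The main obstacle is this variance-consistency step under nonstationarity within the window, together with the interaction of the estimated residuals with block sums. The predictor error must be $o(1)$ after multiplication by $\sqrt\ell$ inside each block, not merely on average. I would first try to handle this with the uniform bound $\sqrt w\max_s\norm{\hat m_s-m_s}\to0$, which dominates $\sqrt\ell$ because $\ell\le w$. If $\sigma^2(u)=0$ the normal limit is degenerate and Pólya's theorem fails, so I would note a nondegeneracy requirement, which is implicit in the assumed bootstrap CLT.
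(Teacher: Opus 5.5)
Your part (i) follows the paper's own proof. Circularity makes the bootstrap mean equal the window mean, with Monte Carlo error removed by $B\to\infty$. The predictor term is handled by Lemma~\ref{lem:decomp} plus Cauchy--Schwarz with (A1) and (A3). The oracle window mean converges by the martingale-difference bound $O(1/w)$ plus the Lipschitz bias $O(w/T)$. For (ii), the paper's proof simply appeals to the added moving-block bootstrap CLT hypothesis. You try to derive the variance matching explicitly, and one step of that derivation fails.

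The failing step is the claim that centring at the local mean absorbs the drift of $G$. In Algorithm~\ref{alg:bootstrap} blocks are drawn from the whole window. Let $\bar Y_j$ be the mean of $\hat D_s^2$ over the $j$-th circular block, and take $k=w/\ell$ blocks per resample. Then
\[
  w\,\Var^*(\hat v_t^*)=\frac{\ell}{w}\sum_{j=1}^{w}\bigl(\bar Y_j-\hat v_t^{\mathrm{loc}}\bigr)^2 .
\]
The deterministic part $\bar G_j$ of $\bar Y_j$ (the block average of $G(s/T)$) varies across the entire window, not just within a block. If $G$ has slope $c\neq0$ near $u$, this part contributes about $c^2\ell w^2/(12T^2)$, not $O(\ell^2 w/T^2)$.

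The real-world statistic behaves differently. It is centred at $\bar v_{t,T}$, which removes the drift exactly. Because $\eta_s$ is a martingale difference, its scaled variance is just $\sigma_w^2:=w^{-1}\sum_{s\in W_t}\E\eta_s^2$. The two variances therefore differ by a term of order $\ell w^2/T^2$, and the stated rates do not make it small. For example, $w=T^{0.9}$ and $\ell=T^{0.6}$ satisfy $w/T\to0$ and $\ell/w\to0$, yet $\ell w^2/T^2=T^{0.4}\to\infty$; even your order $\ell^2w/T^2=T^{0.1}$ diverges. The bootstrap law is then too dispersed, and the Kolmogorov distance does not vanish.

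To close the argument you have three options:
\begin{itemize}
  \item add a condition such as $\ell w^2/T^2\to0$, which holds automatically under the undersmoothing $w=o(T^{2/3})$ that the paper needs anyway for an interval on $v_t$;
  \item use a locally detrended block variance;
  \item read the added hypothesis as the paper does, as asserting the bootstrap approximation for the local mean directly, so that only the predictor transfer remains to prove.
\end{itemize}

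Two smaller points. First, local stationarity here constrains only $\E[D_{t,T}^2\mid\F_{t-1}]$, so it gives no continuous limit $\sigma^2(u)$. You do not need one: compare both laws with $\mathcal N(0,\sigma_w^2)$, with $\sigma_w^2$ bounded away from zero. Second, the bootstrap-side predictor transfer is cleaner by coupling than through the block variance, which would need fourth-moment control of the predictor error that (A3) does not provide. Resample $\hat D_s^2$ and $D_s^2$ with the same block indices, and let $\tilde v_t^*$ be the resampled oracle mean. Circularity then gives $\E^*\bigl|(\hat v_t^*-\hat v_t^{\mathrm{loc}})-(\tilde v_t^*-\tilde v_t)\bigr|\le 2w^{-1}\sum_{s\in W_t}|\hat D_s^2-D_s^2|$. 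By your Cauchy--Schwarz bound, $\sqrt w$ times this is $o_p(1)$.
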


\noindent The bootstrap thus reproduces the \emph{sampling uncertainty} of the
local average, not a distinct point estimate; the number of draws $B$ controls
only Monte-Carlo error. A confidence interval for $v_t$ itself is valid only once
the smoothing bias $\bar v_{t,T}-G(u)=O(w/T)$ is negligible relative to the
interval width $O(w^{-1/2})$ --- that is, $w=o(T^{2/3})$, one must \emph{under}-smooth
relative to the MSE-optimal bandwidth of Corollary~\ref{cor:dichotomy}(i) --- or is
explicitly bias-corrected. The reported band is a percentile interval on the
$\Ihat_t^2$ (variance) scale; a band for $\It$ follows by taking $\sqrt{\cdot}$ of
its endpoints.

\begin{remark}[What the other regimes estimate]
\label{rem:regimes-estimate}
The conditioning is essential. Under independent replication with \emph{distinct}
histories (an i.i.d.\ panel) a cross-replication root-mean-square converges to the
\emph{unconditional} level~(A) of Remark~\ref{rem:three-levels}, not the
path-specific $\It$; the conditional target is recovered by replication only when
the past is held fixed (branching or Monte-Carlo replication). For a \emph{fixed}
stationary process ($G$ constant) the
trailing-window mean converges to $\E[D^2]=\It^2$ --- consistent, but with no
trajectory to track (a level check, Example~\ref{ex:ar1}). The interesting case is
a varying $G$: slowly on the $T$-scale it is nonparametrically estimable at rate
$T^{-2/3}$ (Theorem~\ref{thm:bbconsistency}, Corollary~\ref{cor:dichotomy}(i))
\citep{dahlhaus1997fitting,dahlhaus2006statistical}; at the sampling scale it is
not (Corollary~\ref{cor:dichotomy}(ii)). This corrects the na\"ive ``fixed $t$,
$T\to\infty$'' reading, which conflates the constant level with the trajectory.
\end{remark}

\paragraph{Block-length sensitivity}
The block length is a further practical choice. Because it affects the estimated
sampling distribution, not the point estimate, it is treated as a sensitivity
parameter --- swept over a stable grid and assessed by the long-run variance of
$\hat D_s^2$ rather than by minimising point-estimate MSE, with the point estimate
unchanged across the grid; details are in the Supplementary Material (Section~S6).

\subsection{Two Lightweight Alternatives}

Both surrogates target the \emph{conditional} second moment of
Definition~\ref{def:It}, not a single realized revision, and that distinction is
what makes them nontrivial. A lone difference $|D_t|$ is one draw whose conditional
mean square \emph{is} the estimand, so it cannot serve as an estimator on its own.
Each surrogate therefore replaces the bootstrap's average over replications with an
average over a short \emph{local window}, buying an $O(1)$ per-step cost at the
price of conditional resolution --- the same bias--resolution trade-off that
Theorem~\ref{thm:floor} makes precise.

\paragraph{Surrogate A: windowed RMS of forecast revisions}
Surrogate~A is the root-mean-square revision of a $k$-window running predictor over
a trailing window of width $w$,
$\Ihat_{t,A} = \bigl(w^{-1}\sum_{s=t-w+1}^{t}(\bar{x}_{s,k}-\bar{x}_{s-1,k})^2\bigr)^{1/2}$
($1\ll k,\ w\ll T$).

\outlinebox{$O(1)$ per step with a rolling accumulator, hence streaming. Two biases
compete: the predictor window $k$ (mean approximation) and the averaging window $w$
(a local stand-in for conditioning on $\F_{t-1}$); larger $w$ cuts variance but
blurs the time variation the estimator exists to capture. Supplement~S4 gives the
exact decomposition.}

\paragraph{Surrogate B: innovation-scaled surrogate}
Surrogate~B is $\Ihat_{t,B} = |\hat\phi|\cdot\hat\sigma_{t}$, where $\hat\sigma_t$
is a conditional scale estimate for the one-step innovation (a trailing RMS of
$\hat{e}_t$, or a fitted conditional-variance model) and $\hat\phi$ the estimated
first-order dependence.

\outlinebox{It is motivated by Example~\ref{ex:ar1}: since $D_t=\phi\varepsilon_t$,
Definition~\ref{def:It} gives $\It = |\phi|\,\sigma_t$, so the correct plug-in uses
the \emph{conditional scale} of the innovation, not its realized value. Under
homoskedasticity the surrogate is constant --- the honest answer for AR(1) --- and
becomes informative only when innovations are conditionally heteroskedastic. The
identity extends to AR($p$) ($\It=|\phi_1|\sigma_t$) and, for a VAR with first-lag
matrix $A_1$ and innovation covariance $\Sigma_t$, gives
$D_t=A_1\varepsilon_t$ and
$\It^2=\operatorname{tr}(A_1\Sigma_t A_1^\top)$.}

\subsection{A Model-Based Estimator: Conditional Variance}
\label{sec:cvestimator}
\outlinebox{When the revision is driven by conditional heteroskedasticity --- the
leading case of smoothly-varying $\It$ --- a fitted conditional-variance model
estimates $\It$ directly: with $\hat\sigma_{t\mid t-1}$ the conditional standard
deviation from a GARCH-type model
\citep{engle1982autoregressive,bollerslev1986generalized}, the identity of
Example~\ref{ex:ar1} gives $\Ihat_t^{\mathrm{cv}} = |\hat\phi_t|\,\hat\sigma_{t\mid
t-1}$, where $\hat\phi_t$ is the fitted first-order sensitivity of the conditional
mean. It is lower-variance and cheaper than the bootstrap when well specified,
biased when it is not (e.g.\ nonlinear predictability with near-homoskedastic
noise), and places $\It$ in the realized-volatility literature
\citep{andersen2003modeling}, the natural scholarly home of the estimand.}

\subsection{A Matched State-Space Estimator}
\label{sec:ssm}

The conditional-variance estimator matches the model to the \emph{volatility}
structure of $\It$. When $\It$ is instead \emph{state-driven} --- its variation
comes from uncertainty about a latent state, so the conditional moment jumps as
the state posterior sharpens or diffuses --- the matched model is a state-space
one. We fit a two-state Markov-switching AR(1) --- $S_t\in\{1,2\}$ Markov($P$),
$X_t = \mu_{S_t} + \phi_{S_t} X_{t-1} + \sigma_{S_t}\,\varepsilon_t$ --- by the EM
(Baum--Welch) algorithm, and read $\It$ off its fitted filter --- the
discrete-state analogue of the Kalman filter \citep{kalman1960new} --- exactly
as the true $\It$ is computed (Section~\ref{sec:Itproperties}), but with fitted
parameters in place of the oracle ones. This one model nests both state-driven
benchmarks: $\phi_k\equiv 0$ is the Gaussian-emission HMM, $\mu_k\equiv 0$ the
regime-switching AR. The two structures have very different likelihood geometry
--- a shift in level versus a shift in the AR coefficient alone --- and a single
mean-separated start converges to a spurious optimum on the AR-only case, so we
run a few fixed and random initialisations and keep the highest-likelihood fit.

\outlinebox{This is the dearest entry (multiple EM passes), and deliberately so:
the general tool one reaches for once the cheap estimators fail, i.e.\ once $\It$
is diagnosed as state-driven. It is accurate in \emph{both} regimes, but on
volatility-driven $\It$ the cheap conditional-variance estimator
Pareto-dominates it as a point estimator, so one pays for it only when the
structure demands. Its role is to show the ``wall'' of Section~\ref{sec:sim} is a
property of the estimator, not of $\It$: matched to the right structure, $\It$ is
recovered where every generic estimator, and even an oracle predictor, reads
zero.}

\begin{proposition}[Achievability by a matched model]
\label{prop:achieve}
Fix a rescaled time $u\in(0,1)$ and set $t=\lfloor uT\rfloor$. Suppose the
observed process is stationary and ergodic and lies in the finite-state
Markov-switching autoregressive family that $\Ihat_t^{\mathrm{ssm}}$ fits, with
parameter $\theta_0$ identifiable up to label permutation and satisfying the
regularity conditions of Douc et al. \citep{douc2004asymptotic}. Let
$\hat\theta$ be the offline
maximum-likelihood estimate from the whole series and
$v_t(\theta)=\Var(D_t\mid\F_{t-1};\theta)$ the forecast-revision variance returned
by the Hamilton filter run to time $t$. Then
$\Ihat_t^{\mathrm{ssm}}=\sqrt{v_t(\hat\theta)}\xrightarrow{p}\It$ as $T\to\infty$,
even when $\It$ varies at the sampling scale.
\end{proposition}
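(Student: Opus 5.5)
The plan is to combine consistency of the maximum-likelihood estimator with continuity of the filtered revision variance in the parameter, uniformly in $t$. The true estimand is $\It^2=v_t(\theta_0)$, because the Hamilton filter at $\theta_0$ returns the exact conditional law. It therefore suffices to show $v_t(\hat\theta)-v_t(\theta_0)\xrightarrow{p}0$ at $t=\lfloor uT\rfloor$. The square root is continuous, so $\Ihat_t^{\mathrm{ssm}}\to\It$ then follows by the continuous mapping theorem.

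\textbf{Step 1 (closed form of $v_t(\theta)$).} For the Markov-switching AR(1), with filtered probabilities $\pi_{t}(\theta)=\Pr_\theta(S_t=\cdot\mid\F_t)$, write
\[
m_t(\theta)=\sum_{j}\Pr_\theta(S_{t+1}=j\mid\F_t)\,(\mu_j+\phi_jX_t).
\]
The quantity $m_{t-1}(\theta)$ is the analogous two-step expression. Then $D_t(\theta)=m_t(\theta)-m_{t-1}(\theta)$. The conditional law of $X_t$ given $\F_{t-1}$ is a finite Gaussian mixture with weights $P^\top\pi_{t-1}(\theta)$. Hence $v_t(\theta)=\E_\theta[D_t(\theta)^2\mid\F_{t-1}]$ is an explicit integral against that mixture. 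It is a smooth function $\Psi(\pi_{t-1}(\theta),X_{t-1},\theta)$, continuous on a compact neighbourhood of $\theta_0$ (with $\sigma_k$ bounded away from $0$). Since $\Psi$ is quadratic in $X_{t-1}$ with bounded coefficients, the modulus of continuity in $(\pi,\theta)$ is bounded by $C(1+X_{t-1}^2)$. Here $X_{t-1}$ is tight by stationarity.

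\textbf{Step 2 (parameter consistency).} Under the stated identifiability and the regularity conditions of Douc et al.\ \citep{douc2004asymptotic}, $\hat\theta\to\theta_0$ in probability (a.s.) up to label permutation. Label permutation leaves $v_t(\theta)$ invariant, because $\It$ is a functional of the observed law. So we may take $\hat\theta$ aligned with $\theta_0$.

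\textbf{Step 3 (uniform filter stability — the main obstacle).} The difficulty is that $\hat\theta$ depends on the whole sample, including $X_t$, and $t\to\infty$ with $T$. Pointwise continuity at fixed $t$ is therefore insufficient. I would use the exponential forgetting and Lipschitz-in-$\theta$ bounds for the Hamilton filter established in \citep{douc2004asymptotic}. Under strictly positive transition probabilities (a compact parameter set with $\min P_{ij}\ge\epsilon$), the filter map is a contraction in total variation with rate $\rho<1$. This gives
\[
\sup_t\ \|\pi_t(\theta)-\pi_t(\theta_0)\|_{TV}\le \|\theta-\theta_0\|\cdot \sup_t\sum_{k\ge0}\rho^{k}\,L(X_{t-k})
\]
for a locally integrable envelope $L$. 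By stationarity, $\sum_k\rho^kL(X_{t-k})$ has a law that does not depend on $t$, and so it is $O_p(1)$ uniformly in $t$. Combining this with $\hat\theta-\theta_0=o_p(1)$ gives $\pi_{t-1}(\hat\theta)-\pi_{t-1}(\theta_0)=o_p(1)$. The random parameter is handled by taking the supremum over a shrinking ball around $\theta_0$ before substituting $\hat\theta$, which avoids any need for independence between $\hat\theta$ and the path. If the Douc et al.\ bounds are stated only for log-likelihood increments, I would instead adapt their coupling argument directly to the predictive filter.

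\textbf{Step 4 (conclusion).} Plugging into Step 1 gives
\[
|v_t(\hat\theta)-v_t(\theta_0)|\le C(1+X_{t-1}^2)\bigl(\|\hat\theta-\theta_0\|+\|\pi_{t-1}(\hat\theta)-\pi_{t-1}(\theta_0)\|\bigr)=o_p(1).
\]
No smoothness of $t\mapsto\It$ enters anywhere; only the current filtered state does. This is why the result holds even when $\It$ varies at the sampling scale, in contrast with the floor of Corollary~\ref{cor:dichotomy}(ii).
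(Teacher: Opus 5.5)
Your proposal is correct and follows essentially the paper's route: MLE consistency up to relabelling, $v_t$ as a Lipschitz functional of the predictive state law with constant $O(1+X_{t-1}^2)$, and geometric filter stability under $\min_{j,k}P_{jk}>0$ giving a tight random Lipschitz constant in $\theta$ into which $\hat\theta$ is substituted, with your direct perturbation recursion doing the job of the paper's tangent-filter bound. Three small repairs are needed: the filter started at time $1$ matches the exact conditional law at $\theta_0$ only up to an $O(\rho^{t})$ initialisation error, which the same contraction removes (the paper's Step~3); the contraction should be run in the Hilbert projective metric, or on predictive laws bounded below by $\min_{j,k}P_{jk}$, because the one-step prediction-plus-Bayes map need not contract in total variation; and the $\sup_t$ in your displayed bound must go, since $\sup_{t\le T}\sum_k\rho^kL(X_{t-k})$ diverges for unbounded $L$, whereas you only need the bound at the single index $t-1$.
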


\begin{proof}[Proof outline]
The argument has four steps; the technical lemmas and the regularity conditions
(SS1)--(SS4) are given in full in Supplement~S10.
\emph{(1) Consistency.} The offline MLE is strongly consistent up to label
permutation \citep{douc2004asymptotic}, and $v_t$ --- a functional of the
observable predictive law --- is permutation-invariant.
\emph{(2) Filter forgetting.} A strictly positive transition matrix makes the
Hamilton filter a Birkhoff--Hopf contraction in the Hilbert projective metric, so
it forgets its initialisation geometrically and $v_t(\theta)$ agrees with its
stationary (infinite-past) version up to $o_{\mathrm{a.s.}}(1)$, uniformly near
$\theta_0$.
\emph{(3) Continuity.} $\theta\mapsto v_t(\theta)$ is uniformly Lipschitz, so
consistency transfers: $v_t(\hat\theta_T)-v_t(\theta_0)=o_p(1)$.
\emph{(4) Correct specification.} At $\theta_0$ the filter is exact,
$v_t(\theta_0)=\It^2$ (Definition~\ref{def:It}).
Chaining (2)--(4), $v_t(\hat\theta_T)=\It^2+o_p(1)$, and
$|\sqrt a-\sqrt b|\le\sqrt{|a-b|}$ gives $\Ihat_t^{\mathrm{ssm}}\xrightarrow{p}\It$.
As a global nonlinear functional of the series, $\Ihat_t^{\mathrm{ssm}}$ lies
outside Definition~\ref{def:local}, so the floor of
Corollary~\ref{cor:dichotomy}(ii) does not bind. The detailed proof --- the
$O(\rho^t)$ forgetting bound, the Lipschitz constant, and the offline/online
distinction --- is in Supplement~S10.
\end{proof}

\begin{remark}[Correct specification, and what is not yet proved]
\label{rem:misspec}
Proposition~\ref{prop:achieve} is a best case: $v_t(\theta_0)=\It^2$ relies on the
fitted family containing the truth, not on robustness. Under misspecification the
MLE targets a KL projection $\theta_*$ with $v_t(\theta_*)\ne\It^2$ in general, a
gap we probe against exact ground truth (Supplement~S2, $200$ paths per condition).
The estimator stays \emph{robust} where the regime structure remains identifiable
--- $t_4$ emissions under a Gaussian fit (Spearman $0.70$) and weakly separated
regimes ($0.52$) both clear the wall --- but \emph{erodes} under the wrong state
count: a three-state truth fitted by two states gives $0.35$, still ahead of the
conditional-variance estimator ($0.27$). A matched model thus degrades gracefully
under emission and separation error but needs roughly the right latent dimension.
\end{remark}

\section{Numerical Study}
\label{sec:numerical}

\subsection{Simulation: which estimator, and when}
\label{sec:sim}

We evaluate the menu against the \emph{exact} conditional $\It$ of
Section~\ref{sec:Itproperties}: closed-form for the AR(1) and AR--GARCH
processes, and by the forward filter for the HMM and the regime-switching AR.
Two hundred independent paths are drawn per process at $T=400$, and the whole
comparison is repeated across the length grid $T\in\{200,400,800,1600,3200\}$
(Table~\ref{tab:scaling}); for each path and estimator we record the Spearman
rank correlation with the true trajectory (scale-free tracking), and the per-series wall-clock cost. Point accuracy is reported by two target-blind
metrics: an \emph{independently calibrated} relative RMSE --- a single affine map
fitted on $100$ paths and scored on the disjoint $100$ (median over those held-out
paths), so it never uses
the evaluated path's own scale --- and the signed relative \emph{level bias}
$(\overline{\Ihat}-\overline{\It})/\overline{\It}$. A per-path best-affine
\emph{oracle}-calibrated RMSE, which lets each estimate borrow the target's scale
and location, isolates trajectory \emph{shape} but flatters accuracy, so we
relegate it --- with the full uncalibrated battery --- to Supplement~S9 (where it
also reproduces the Spearman ordering). The distinction is material: for the
conditional-variance estimator on AR--GARCH the oracle-calibrated error ($0.16$)
rises to $0.53$ under independent calibration, so the robust facts are the rank
tracking and the near-zero level bias, not the flattered oracle figure.
Table~\ref{tab:sim} reports medians over the $200$ paths with $95\%$ Monte-Carlo
intervals (percentile bootstrap over paths). Two design
points follow from Definition~\ref{def:It}: AR(1) is degenerate
($\It=|\phi|\sigma_\varepsilon$ is constant, so a rank correlation is undefined
and it serves only as a level check, Example~\ref{ex:ar1}); and the AR--GARCH
process is the regime in which $\It$ varies smoothly through conditional
volatility, while the HMM and regime-switching AR are the state-driven regime, in
which $\It$ jumps with the latent state. Beyond the five estimators we also run,
on the HMM, an \emph{oracle-predictor control}: the windowed estimator fed the
exact forecast revision $D_t$ from the true filter, which sets the predictor
error $\varepsilon_s\equiv 0$ and isolates the second-moment term $\delta_t$.

\begin{table}[ht]
\centering
\small
\caption{Estimator menu against the exact $\It$: median over $200$ independent
paths at $T=400$, with $95\%$ Monte-Carlo intervals for the Spearman correlation
(percentile bootstrap over paths). The principal accuracy metrics are the
scale-free Spearman rank correlation and an \emph{independently calibrated}
relative RMSE (one affine map fitted on $100$ paths and scored on the disjoint
$100$, reported as the median over those held-out paths, $\sigma_{\It}$-normalised),
with the signed relative level bias
$(\overline{\Ihat}-\overline{\It})/\overline{\It}$; cost is per series, timed
serially (hardware in Supplement~S7). The per-path oracle-calibrated RMSE, which
uses the target's own scale and location and is therefore optimistic, is reported
only as a shape diagnostic in Supplement~S9. AR(1) is a level check only ($\It$
constant, rank correlation undefined). Boldface marks the matched estimator per
process; scaling across $T\in\{200,\dots,3200\}$ is in Table~\ref{tab:scaling}.}
\label{tab:sim}
\resizebox{\textwidth}{!}{%
\begin{tabular}{llrrrr}
\toprule
Process & Estimator & Spearman [95\% CI] & RMSE (indep.) & level bias & cost (ms)\\
\midrule
\multirow{5}{*}{\shortstack[l]{AR--GARCH\\(volatility-driven)}}
 & A: windowed RMS      & $\phantom{-}0.62$ {\scriptsize$[0.60,0.64]$} & $0.80$ & $-0.37$ & $8$\\
 & B: innovation-scaled & $\phantom{-}0.81$ {\scriptsize$[0.80,0.81]$} & $0.64$ & $-0.03$ & $\mathbf{2}$\\
 & cv: AR $+$ GARCH     & $\phantom{-}\mathbf{0.98}$ {\scriptsize$[0.98,0.99]$} & $\mathbf{0.53}$ & $-0.02$ & $23$\\
 & boot: block          & $\phantom{-}0.51$ {\scriptsize$[0.50,0.53]$} & $0.89$ & $-0.03$ & $3443$\\
 & ssm: state-space     & $\phantom{-}0.91$ {\scriptsize$[0.88,0.93]$} & $0.79$ & $-0.03$ & $4789$\\
\midrule
\multirow{5}{*}{\shortstack[l]{HMM\\(state-driven)}}
 & A: windowed RMS      & $-0.01$ {\scriptsize$[-0.02,-0.00]$} & $1.00$ & $-0.35$ & $8$\\
 & B: innovation-scaled & $-0.03$ {\scriptsize$[-0.04,-0.02]$} & $1.00$ & $+0.22$ & $2$\\
 & cv: AR $+$ GARCH     & $-0.15$ {\scriptsize$[-0.19,-0.11]$} & $1.00$ & $+0.23$ & $23$\\
 & boot: block          & $-0.01$ {\scriptsize$[-0.02,-0.00]$} & $1.00$ & $+0.22$ & $3443$\\
 & ssm: state-space     & $\phantom{-}\mathbf{0.65}$ {\scriptsize$[0.64,0.67]$} & $\mathbf{0.55}$ & $-0.02$ & $4789$\\
\midrule
\multirow{5}{*}{\shortstack[l]{Regime-switching\\AR (state-driven)}}
 & A: windowed RMS      & $\phantom{-}0.14$ {\scriptsize$[0.11,0.16]$} & $1.00$ & $-0.46$ & $8$\\
 & B: innovation-scaled & $\phantom{-}0.00$ {\scriptsize$[-0.01,0.01]$} & $1.00$ & $+0.06$ & $2$\\
 & cv: AR $+$ GARCH     & $\phantom{-}0.15$ {\scriptsize$[0.13,0.16]$} & $0.99$ & $+0.07$ & $23$\\
 & boot: block          & $-0.01$ {\scriptsize$[-0.03,0.02]$} & $1.00$ & $+0.06$ & $3443$\\
 & ssm: state-space     & $\phantom{-}\mathbf{0.82}$ {\scriptsize$[0.79,0.85]$} & $\mathbf{0.62}$ & $-0.02$ & $4789$\\
\bottomrule
\end{tabular}}
\end{table}

\begin{table}[ht]
\centering
\small
\caption{Scaling of tracking accuracy with series length: median Spearman vs the
exact $\It$ over $200$ paths at each $T$ (AR(1) omitted, $\It$ constant). The
pattern is the paper's theory made empirical. \emph{Matched} models show the
\textbf{improving tracking} predicted by the model-based theory --- the
conditional-variance model on the volatility-driven
AR--GARCH and the state-space model on the state-driven processes have rank
correlation rising steadily toward one as $T$ grows --- whereas the generic non-state estimators remain poor
on the state-driven processes at all $T$. For the lag-only members, this pattern is
consistent with Corollary~\ref{cor:dichotomy}(ii); that result does not cover the
fitted conditional-variance model or the latent-state designs. The full table with
$95\%$ Monte-Carlo intervals and per-path oracle-calibrated RMSE is in
Supplement~S8; the widest
intervals occur for the small-sample state-space fits at $T=200$. Boldface marks
the matched estimator per process.}
\label{tab:scaling}
\resizebox{\textwidth}{!}{%
\begin{tabular}{llccccc}
\toprule
Process & Estimator & $T{=}200$ & $400$ & $800$ & $1600$ & $3200$\\
\midrule
\multirow{5}{*}{\shortstack[l]{AR--GARCH\\(vol.-driven)}}
 & A: windowed RMS      & $0.57$ & $0.62$ & $0.65$ & $0.65$ & $0.66$\\
 & B: innovation-scaled & $0.79$ & $0.81$ & $0.83$ & $0.84$ & $0.84$\\
 & \textbf{cv}: AR $+$ GARCH & $\mathbf{0.97}$ & $\mathbf{0.98}$ & $\mathbf{0.99}$ & $\mathbf{1.00}$ & $\mathbf{1.00}$\\
 & boot: block          & $0.44$ & $0.51$ & $0.55$ & $0.57$ & $0.59$\\
 & ssm: state-space     & $0.40$ & $0.91$ & $0.97$ & $0.98$ & $0.98$\\
\midrule
\multirow{5}{*}{\shortstack[l]{HMM\\(state-driven)}}
 & A: windowed RMS      & $-0.00$ & $-0.01$ & $-0.01$ & $-0.01$ & $-0.01$\\
 & B: innovation-scaled & $-0.01$ & $-0.03$ & $-0.02$ & $-0.03$ & $-0.02$\\
 & cv: AR $+$ GARCH     & $-0.15$ & $-0.15$ & $-0.13$ & $-0.12$ & $-0.12$\\
 & boot: block          & $-0.01$ & $-0.01$ & $-0.02$ & $-0.02$ & $-0.02$\\
 & \textbf{ssm}: state-space & $\mathbf{0.60}$ & $\mathbf{0.65}$ & $\mathbf{0.71}$ & $\mathbf{0.76}$ & $\mathbf{0.80}$\\
\midrule
\multirow{5}{*}{\shortstack[l]{Regime-sw.\\AR (state-dr.)}}
 & A: windowed RMS      & $\phantom{-}0.11$ & $0.14$ & $0.14$ & $0.15$ & $0.15$\\
 & B: innovation-scaled & $\phantom{-}0.01$ & $0.00$ & $0.01$ & $-0.00$ & $0.00$\\
 & cv: AR $+$ GARCH     & $\phantom{-}0.12$ & $0.15$ & $0.14$ & $0.15$ & $0.17$\\
 & boot: block          & $-0.02$ & $-0.01$ & $0.01$ & $-0.00$ & $0.00$\\
 & \textbf{ssm}: state-space & $\mathbf{0.60}$ & $\mathbf{0.82}$ & $\mathbf{0.94}$ & $\mathbf{0.97}$ & $\mathbf{0.99}$\\
\bottomrule
\end{tabular}}
\end{table}

\begin{figure}[t]
\centering
\includegraphics[width=0.78\linewidth]{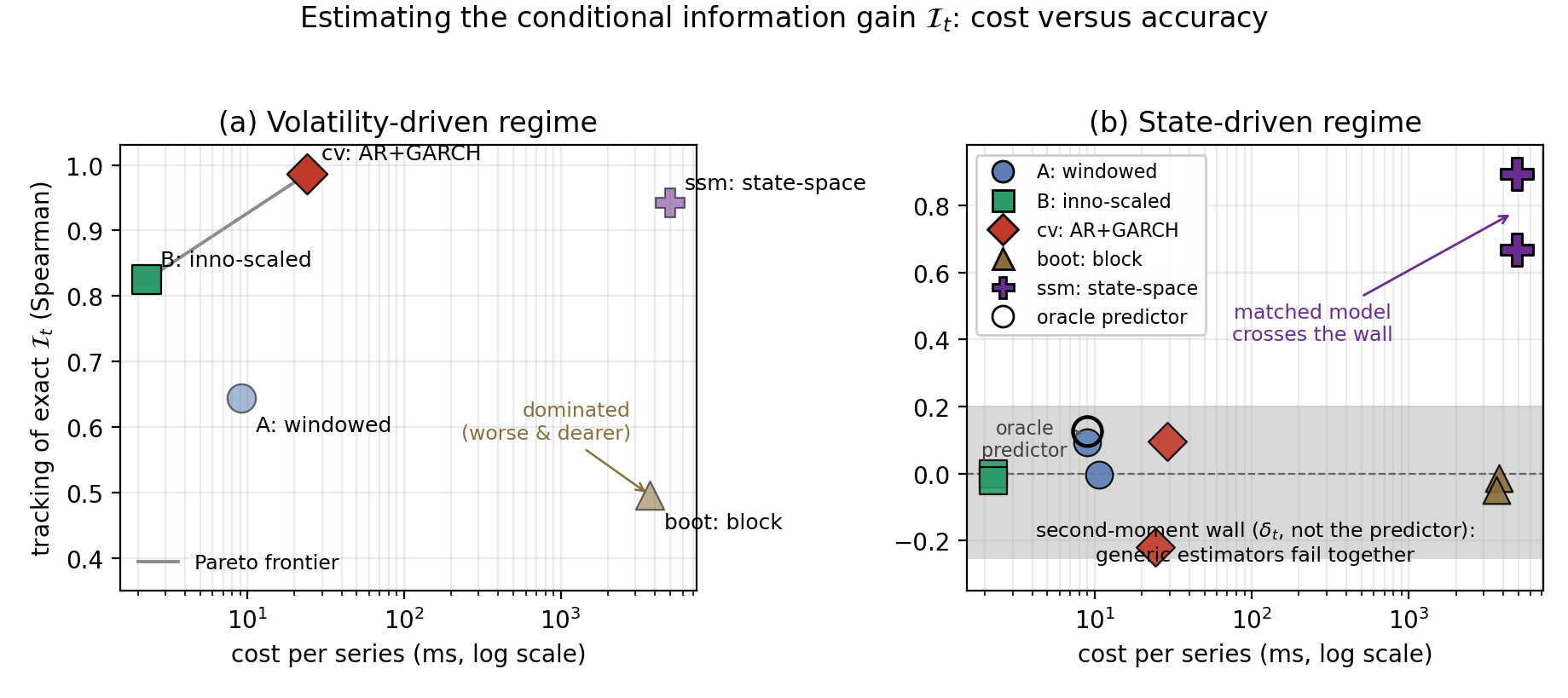}
\caption{\textbf{Central result: the cost--accuracy frontier for estimating the
conditional forecast-revision scale $\It$} (median over $200$ paths; cf.\
Table~\ref{tab:sim}). \emph{(a)} When $\It$ is volatility-driven, the
conditional-variance estimator attains the best accuracy at near-minimal cost,
and the nonparametric block bootstrap is \emph{Pareto-dominated as a point
estimator} --- less accurate \emph{and} over $100\times$ dearer, though it retains
its value for interval estimation. \emph{(b)} When $\It$ is state-driven, the
generic estimators collapse to a wall of near-zero accuracy regardless of cost;
an oracle predictor (open marker) does \emph{not} lift them, so the wall is a
second-moment-tracking limit ($\delta_t$), not the predictor term
($\norm{\varepsilon}$) of Lemma~\ref{lem:decomp}. The matched state-space
estimator (ssm) crosses it, recovering $\It$ at the cost of the dearest fit.}
\label{fig:frontier}
\end{figure}

Figure~\ref{fig:frontier} is the paper's central result, and it separates cleanly
by \emph{what drives} $\It$. On the volatility-driven AR--GARCH the
conditional-variance estimator tracks the truth almost exactly (Spearman $0.98$,
level bias $-2\%$, independently calibrated RMSE $0.53$) at a fraction of the
bootstrap's cost, sitting at the top-left of the frontier, while the block
bootstrap is Pareto-dominated \emph{as a point estimator}
(Figure~\ref{fig:frontier}a). The streaming surrogates follow ($\text{B}=0.81$,
$\text{A}=0.62$); the rolling block bootstrap tracks \emph{worst} ($0.51$) at over
a hundredfold the cost, because on a single path it resamples a trailing window and
so estimates a local \emph{average} second moment, smoothing exactly the variation
the others follow. Its value is the uncertainty band and the replicated regime
(Theorem~\ref{thm:bbconsistency}), not single-path point accuracy (Supplement~S5).

On the HMM and regime-switching processes \emph{every generic estimator} is
near-zero-correlated with $\It$ (Table~\ref{tab:sim}). The natural reading ---
predictor misspecification, since these use a linear predictor on a nonlinear
mean --- is ruled out by the oracle-predictor control: feeding the windowed
estimator the \emph{exact} revision ($\varepsilon_s\equiv0$) still yields only
Spearman $0.13$, $0.08$, $0.06$ at $w=10,25,50$ (median over $200$ paths). Even an
exact predictor barely helps, so the binding constraint is $\delta_t$ --- a
windowed second moment cannot track a moment that jumps with the latent state.

The remedy is a better \emph{model}, not a better predictor. The matched
state-space estimator (Section~\ref{sec:ssm}) tracks $\It$ at Spearman $0.65$ (HMM)
and $0.82$ (RS-AR), crossing the wall the generic estimators and the oracle
predictor hit (Figure~\ref{fig:frontier}b). This empirical wall is \emph{consistent
with} the lag-only bound of Theorem~\ref{thm:floor} but not identical to it (there
the switching state is observed; here it is latent and the failure is documented,
not proved for every smoother). The matched fit is the dearest entry and pays only
once the cheaper models fail: on AR--GARCH it stays accurate ($0.91$) but is
Pareto-dominated by the conditional-variance model at $\sim\!200\times$ less cost.
The scaling grid (Table~\ref{tab:scaling}) matches the theory: as $T$ grows the
matched estimators' tracking improves toward one (cv to $1.00$; ssm to $0.80$ and
$0.99$ at $T=3200$) --- the empirical signature of the pointwise consistency of
Proposition~\ref{prop:achieve} (established for $\Ihat_t^{\mathrm{ssm}}$; the
analogous GARCH-plug-in argument for $\Ihat_t^{\mathrm{cv}}$ is standard but not
carried out here) --- while the generic estimators stay pinned at the wall for
\emph{all} $T$.
The lesson is not that one estimator dominates but that one must match the model to
the second-moment structure of $\It$ --- which Section~\ref{sec:selection}
operationalises.

\paragraph{Scope and robustness}
Every quantitative claim here is conditional on the synthetic designs, where exact
ground truth is available: the numbers are evidence for the \emph{mechanisms}, not
universal constants, and the real-data checks (Section~\ref{sec:realdata}) are
illustrations needing reconfirmation on further series. The conclusions recur
\emph{across} designs rather than at a single point --- three process families,
five lengths $T\in\{200,\dots,3200\}$ ($200$ paths per cell with $95\%$ intervals,
Tables~\ref{tab:sim}--\ref{tab:scaling}), three recurrent spectral radii in the
gate study, and three forms of misspecification (Supplement~S2) --- with
separations exceeding Monte-Carlo error. We state the conclusions conservatively:
the floor is proved only for the specified estimator class, achievability only
under correct specification, and the cost inversion only for the volatility-driven
designs studied. Reproducibility details and the machine manifest are in the
Code and Data Availability statement and Supplement~S7.

\subsection{A Candidate Free Estimator: The Forget Gate}
\label{sec:S4}

The remaining candidate would be free, which is why it is worth ruling out
carefully. A network trained for one-step prediction already computes retention
gates $f_t\in(0,1)$ at every step. Two readings must be kept apart. The
\emph{standard} one, which the LSTM and GRU literature supports, is that these
gates regulate how much of the past to retain versus how much a new observation
influences the state
\citep{hochreiter1997long,gers2000learning,cho2014learning}. A \emph{stronger}
statistical claim --- that a gate value or gate ratio \emph{estimates} a particular
conditional moment --- does not follow from that interpretation. In our own earlier
work on this program we advanced exactly the stronger claim, as a \emph{heuristic}
rather than a theorem: the relation $f_t\approx\It/\mathcal I_{t-1}$, proposed but
not established and not tested against exact history-specific ground truth. If it
held, $\It$ would come for nothing wherever such a network is already trained. We
therefore treat that proxy as an \emph{empirical hypothesis} and test it directly
against the exact conditional $\It$ --- LSTM \citep{hochreiter1997long} and GRU
\citep{cho2014learning} cells, three processes, three spectral radii, $18$
configurations.

The \emph{free} reading does not survive, but the honest answer is two-sided. The
\emph{averaged} retention gate does \emph{not} track $\It$ (pooled Spearman
$\approx 0$ in both the ratio form $\bar f_t\!\sim\!\It/\mathcal I_{t-1}$ and a
level form, with the networks well trained, one-step
$\mathrm{MSE}/\mathrm{Var}\approx0.57$--$0.64$), so the naive proxy is refuted. Yet
the information is not \emph{absent}: a \emph{trained linear probe} on the full
$32$-unit gate vector, fitted on half the sequences and scored out of sample on the
other half, recovers $\It$ well (Spearman $0.86$--$0.94$, $R^2$ up to $0.89$), far
above the averaged gate and in the range of the matched state-space estimator.
Individual units carry dynamic range that cancels in the mean, but a linear
combination reconstructs the signal. The representation therefore \emph{does}
encode $\It$; what fails is reading it off for free.

The practical lesson is exact: as a drop-in, label-free estimator the forget gate
should \emph{not} be used (Section~\ref{sec:selection}), since the naive read-outs
do not recover $\It$ and the probe that does is \emph{supervised}, needing the very
quantity one lacks. The probe's success is nonetheless a positive signal that a
decoder trained on simulated ground truth, or an architecture designed to expose
$\It$, could make the representation usable --- a downstream question left to future
work. Provenance, the training and early-stopping protocol, the linear-probe
control, and the per-unit permutation analysis are in Supplement~S1.

\subsection{Real-Data Illustrations}
\label{sec:realdata}

The synthetic studies supply exact ground truth; the real series do not, so
estimator agreement must \emph{not} be read as ground truth. For each series we
report the $\It$ trajectory each estimator produces, pairwise agreement,
per-series cost, residual diagnostics, and an external predictive check ---
whether a larger estimated $\It$ forecasts a larger realised improvement from
updating the one-step mean rather than retaining the pre-update forecast. Two
series contrast the sources of variation: daily S\&P~500 log-returns
(2004--2023, $n=5032$), plausibly volatility-driven, and annual sunspot counts
($n=309$, square-root scaled), a classical nonlinear-time-series benchmark
\citep{tong1990nonlinear}.

\paragraph{Volatility example: agreement as a cost screen}
\label{sec:realdata-fin}

On the S\&P~500 returns the estimators have high pairwise rank agreement
($0.79$--$0.97$; conditional-variance and innovation-scaled at $0.92$). This is
\emph{compatible with} a volatility-driven interpretation, but agreement may also
arise from shared smoothing or common misspecification, so it is not proof. The
conditional-variance and streaming estimators should therefore be compared by
rolling predictive loss, interval calibration, and standardized-residual checks;
among candidates that pass, cost can decide, and the spread is stark --- the
innovation-scaled surrogate runs in $32$\,ms against the rolling block bootstrap's
$\mathbf{56}$\,\textbf{seconds}. The bootstrap remains useful when an interval
around the local second-moment estimate is required: agreement alone does not
imply that its uncertainty calculation ``buys nothing.''

\emph{An external, prequential predictive check.} The risk identity of
Section~\ref{sec:Itproperties} makes the estimate testable without ground truth.
For estimated forecasts $\hat m_s$ with error $\varepsilon_s=\hat m_s-m_s$, the law
of total variance gives the \emph{exact} relation
\[
  \E[\Delta_t\mid\F_{t-1}]
  =\It^2+\varepsilon_{t-1}^2-\E[\varepsilon_t^2\mid\F_{t-1}],
  \qquad \Delta_t=(X_{t+1}-\hat m_{t-1})^2-(X_{t+1}-\hat m_t)^2 .
\]
Under \emph{oracle} forecasts ($\varepsilon_s\equiv0$) this collapses to $\It^2$,
so unit slope in a regression of $\Delta_t$ on $\Ihat_t^2$ is the calibration
benchmark only in that case; with estimated forecasts the predictor-error terms
remain, and we therefore read the regression as a \emph{directional} diagnostic
rather than a calibration test. We run this \emph{strictly prequentially} --- the AR(1)
mean and both estimators refitted on an expanding window (monthly), so every
forecast at $t$ uses only data up to $t$. The slope is positive for both ---
$1.60$ (Newey--West SE $0.71$) for Surrogate~B and $2.40$ ($1.27$) for
$\Ihat_t^{\mathrm{cv}}$ --- and the highest-$\Ihat_t$ decile shows a clearly
positive mean reduction ($\approx 0.22$) against a near-zero bulk (a look-ahead
full-sample fit inflates the slopes to $2.8$ and $3.8$). We read this at the
strength the evidence supports: the \emph{direction} is confirmed, and the slopes
above one are \emph{consistent with} a lag-only estimator under-stating $\It^2$ at
volatility spikes (the under-smoothing of Theorem~\ref{thm:floor}), but the HAC
intervals are wide and smoothing lag need not be the sole cause, so this is a
directional confirmation, not a calibrated slope.
A full rolling comparison across the menu, with calibration curves, is left to
future work.

\paragraph{Sunspots: disagreement as a model-checking signal}
\label{sec:realdata-nl}

On the sunspot series the estimators \emph{diverge}: the two windowed surrogates
agree with each other ($0.84$) while the conditional-variance estimator parts
company (agreement $0.08$ and $0.24$), finding little volatility signal and
returning a near-flat estimate. That disagreement \emph{rejects} the claim that
all candidates estimate the same trajectory, but it does not by itself identify a
state-driven mechanism or crown a state-space estimator. The next step is to
compare nonlinear-mean, conditional-variance, and state-space models by
prequential predictive loss and stability across model order, initialisation, and
training period; a state-space estimate is reported only as a candidate
trajectory, with uncertainty and sensitivity analysis. The conclusion is
therefore diagnostic: the simple volatility interpretation is inadequate for this
series, and any additional structure must be validated out of sample, which is
what the selection workflow keys on (Section~\ref{sec:selection}). We present the
sunspot series as a brief \emph{illustration} of this diagnostic step, not as a
validated state-driven conclusion; the substantive real-data evidence is the
prequential S\&P~500 check above.

\subsection{A Diagnostic Selection Workflow}
\label{sec:selection}

The evidence supports a \emph{provisional workflow}, not a universally validated
decision rule. It qualifies the reflex of reaching for the most expensive
estimator and keys the choice to \emph{what generates} $\It$; but estimator
agreement is a low-cost warning signal, not a proof of correctness, and the
workflow should be calibrated in simulation by its sensitivity and specificity
for distinguishing smooth volatility, persistent state changes, nonlinear means,
heavy tails, and long memory. Algorithm~\ref{alg:selection} states the decision
compactly; the paragraphs below elaborate each step and its thresholds.

\begin{algorithm}[ht]
\caption{Choosing an estimator of $\It$ (diagnostic selection workflow).}
\label{alg:selection}
\begin{algorithmic}[1]
\Require series $x_{1:T}$; target mode (offline trajectory / online / interval); budget
\State \textbf{Fix target and mode.} Do not read an offline full-sample fit as an
  online claim without a rolling/prequential refit.
\State \textbf{Diagnose the structure} (free): compute the cheap estimators
  A, B, cv and set $\rho_{\mathrm{pair}}=\min$ of their three pairwise Spearman
  correlations (a conservative agreement summary).
\If{$\rho_{\mathrm{pair}} > 0.80$ \textbf{and} the estimates co-move with the fitted GARCH conditional s.d.\ (Spearman ${>}0.7$)}
  \State \textbf{Volatility-driven} $\Rightarrow$ use $\Ihat_t^{\mathrm{cv}}$
    (GARCH-type, Sec.~\ref{sec:cvestimator}); under an $O(1)$/streaming budget use Surrogate~B.
\ElsIf{$\rho_{\mathrm{pair}} < 0.30$ \textbf{and} the generic estimates are each within Monte-Carlo error of a constant ($|$Spearman with the fitted volatility$|{<}0.2$)}
  \State \textbf{State-driven / nonlinear} $\Rightarrow$ fit a matched latent-state
    model $\Ihat_t^{\mathrm{ssm}}$ (Sec.~\ref{sec:ssm}); pick model order by
    \emph{out-of-sample} predictive loss, not by agreement with the unobserved $\It$.
\Else
  \State \textbf{Ambiguous} $\Rightarrow$ compare nonlinear-mean, conditional-variance
    and state-space models by prequential loss; report order/initialisation sensitivity.
\EndIf
\If{an interval is required}
  \State add the block bootstrap for a band around the local estimate, under-smoothing
    per Thm.~\ref{thm:bbconsistency}(ii); do \emph{not} use it as a single-path point estimator.
\EndIf
\State \textbf{Do not} substitute a trained-network forget gate for $\Ihat_t$ (Sec.~\ref{sec:S4}).
\end{algorithmic}
\end{algorithm}

\textbf{(i) Check consistency with a volatility-driven model.} 
Estimator \emph{agreement} is a free diagnostic. When the generic estimators agree and
track, the volatility-driven reading is not contradicted; when they \emph{disagree}
and all collapse toward zero --- as on the sunspot series
(Section~\ref{sec:realdata}) and the state-driven synthetic processes --- the
data are \emph{inconsistent with} the simple volatility-driven model, which
motivates comparison with nonlinear-mean, latent-state, and other
conditional-moment models rather than proving any one mechanism. On the synthetic
processes, where the mechanism is known, the oracle-predictor control
(Section~\ref{sec:sim}: even an exact predictor leaves Spearman $\le 0.13$)
further shows the remedy is a matched model, not a cheaper or better-predictor
one. Agreement \emph{strengthens} but does not \emph{establish}
the volatility reading.

\textbf{(ii) Choose by regime and budget.} 
For \emph{volatility-driven} $\It$, use
$\Ihat_t^{\mathrm{cv}}$ --- Spearman $0.98$ on AR--GARCH at a few ms per series ---
or Surrogate~B under a streaming/$O(1)$ budget; use Surrogate~A when even a
first-order mean model is unavailable. For \emph{state-driven} $\It$, use
the matched state-space estimator $\Ihat_t^{\mathrm{ssm}}$, the only entry that
tracks it ($0.65$ and $0.82$, rising toward one with $T$); select among candidate
models by \emph{out-of-sample} predictive loss, not agreement with the unobserved
$\It$, and report order and initialisation sensitivity --- it is robust to emission
and separation error but erodes under the wrong state count (Supplement~S2), and is
Pareto-dominated by $\Ihat_t^{\mathrm{cv}}$ on volatility-driven $\It$ at
$\sim\!200\times$ less cost. When \emph{uncertainty bands} are wanted, use the block
bootstrap (under-smoothing per Theorem~\ref{thm:bbconsistency}(ii)); it is not a
single-path point-estimate default.

\textbf{(iii) What to avoid.} 
Use the block bootstrap for \emph{intervals}, not
single-path \emph{point} estimates (there it is Pareto-dominated by
$\Ihat_t^{\mathrm{cv}}$, $>100\times$ dearer and less accurate); once the cheap
estimators agree and track, the dearer ones are redundant unless the diagnosis is
ambiguous or a band is required; and do not substitute a trained forget gate for
$\Ihat_t$ (Section~\ref{sec:S4}).

\subsection{Discussion}
\label{sec:discussion}

Locating each of the paper's three qualifications precisely --- each specific to
this target, these estimator classes, and the designs studied, not a challenge to
the broader validity of bootstrap, forecasting, or gated-network methods --- is
what turns it into guidance.

\emph{Additional resampling need not improve point tracking.} The block bootstrap
only re-estimates the same smoothed local moment (Theorem~\ref{thm:bbconsistency})
\citep{kunsch1989jackknife,politis1994stationary,lahiri2003resampling}, so on the
volatility-driven designs it is Pareto-dominated \emph{as a point estimator} by a
matched conditional-variance model; reserve resampling for the intervals it is
built for.

\emph{A better predictor need not help.} The oracle-predictor control rules out
predictor misspecification as the cause of the state-driven failure --- an exact
forecast still leaves Spearman $\le 0.13$ (Section~\ref{sec:sim}) --- so the binding
constraint is second-moment tracking (Lemma~\ref{lem:decomp},
Theorem~\ref{thm:floor}), and the remedy is a model matched to the mechanism, which
recovers $\It$ (Proposition~\ref{prop:achieve}) and converges on it as $T$ grows.

\emph{A trained gate encodes $\It$, but not for free.} Read through the
value-of-information lens
\citep{hochreiter1997long,gers2000learning,cho2014learning,bialek2001predictability,itti2009bayesian},
a forget gate looks like a free estimate of $\It$; but against exact ground truth
(Section~\ref{sec:S4}) the naive read-outs do not track $\It$, whereas a supervised
probe on the full gate vector recovers it. The representation is rich enough to
expose $\It$ if decoded deliberately --- just not where a label-free read-out
finds it.

\paragraph{What this leaves for later} 
The theoretical scope is deliberately tight:
Theorem~\ref{thm:floor} is a sharp impossibility for externally tuned lag-only
smoothers under a predictable sampling-scale switch and
Proposition~\ref{prop:achieve} a matched-model achievability result, but we do
\emph{not} prove a single minimax lower-and-upper theorem for one latent-state
model, and regard unifying the two constructions as the main open problem. The
practical extensions are a model that identifies the second-moment structure
\emph{automatically} --- a learned state-space or mixture-density estimator, or an
adaptively selected state count --- and a multivariate $\It$ through the
conditional covariance of the vector revision; architectures built to expose $\It$
directly are a separate, downstream question.

\section{Conclusion}
\label{sec:conclusion}

This paper asked which estimator of the conditional forecast-revision scale $\It$
to use under a given computational budget, and answered it against exact ground
truth. A single error decomposition (Lemma~\ref{lem:decomp}) separates predictor
error from second-moment tracking and organizes the menu. Two lessons emerge. When
$\It$ is volatility-driven, the usual cost ordering inverts: in the designs studied
a cheap conditional-variance model is more accurate \emph{and} far cheaper than the
block bootstrap, whose single-path value lies in the interval it supplies, not in a
point estimate. When $\It$ is state-driven, the generic non-state estimators fail
together --- an oracle-predictor control shows the binding term is second-moment
tracking, not the predictor --- while a correctly specified state-space model
recovers $\It$ where they read zero. The practical rule is to match the estimator's
model to the structure generating $\It$ and to pay for the dear fit only when the
cheaper models prove inadequate (Section~\ref{sec:selection}). The forget gate,
proposed elsewhere as a free proxy, does not survive as a label-free estimator,
though a supervised probe recovers $\It$ from the full gate population --- a precise
cautionary negative rather than a usable shortcut. The scope is deliberately
narrow --- the smoothing limit is proved for one estimator class and achievability
only under correct specification (open problems in Section~\ref{sec:discussion})
--- but within it the message is sharp: match the estimator to the second-moment
structure of $\It$: structure decides which estimators are adequate, and cost then
chooses among them.

\section*{Declaration of Competing Interest}

The authors declare that they have no known competing financial interests or
personal relationships that could have appeared to influence the work reported in
this paper.

\section*{Funding}

This work was supported by the National Science and Technology Council (NSTC),
Taiwan, under grant~115-2118-M-001-004-.

\section*{Code and Data Availability}

The two real series are public: daily S\&P~500 closing prices (Yahoo Finance
\texttt{\textasciicircum GSPC}, 2004--2024, analysed as percentage log-returns) and
the annual sunspot numbers (\texttt{statsmodels}/WDC-SILSO, square-root scaled). All
code, random seeds, the synthetic series, the trained network weights (or the seeds
that regenerate them), the software and machine manifest, the cached data with
retrieval dates and preprocessing, and a figure/table-to-script mapping will be
released under permissive licences at
\url{https://github.com/knight-ivan/Conditional-Forecast-Revision} upon acceptance.

\appendix
\section{Proof of the Central Bias--Variance Identity}
\label{app:proofs}

The proof of Theorem~\ref{thm:bbconsistency} (block-bootstrap consistency and
validity), the bias decompositions for the lightweight surrogates, and the
complete proof of Proposition~\ref{prop:achieve} (matched-model achievability) are
given in the Supplementary Material (Sections~S3, S4, and~S10). We give here the
short proof of the central limitation result.

\subsection{Proof of Theorem~\ref{thm:floor} and Corollary~\ref{cor:dichotomy}}
\label{app:floor}
Condition on the variance path $\{v_r\}$ and the tuning field $\mathcal T$. With
$\eta_s=D_s^2-v_s$, we have
$\hat v_t-v_t = B_t + \sum_{s\in W_t}a_{t,s}\eta_s$ where
$B_t=\sum_{s\in W_t}a_{t,s}(v_s-v_t)$.
The conditional mean of the second term is zero, and conditional uncorrelatedness
gives
$\E[(\hat v_t-v_t)^2\mid\{v_r\},\mathcal T]
 = B_t^2 + \sum_{s\in W_t}a_{t,s}^2\,\Var(D_s^2\mid\F_{s-1})$.
The two bounds follow from $v_{\min}\le v_s\le v_{\max}$ and
$\sum_s a_{t,s}^2=1/w_t^{\mathrm{eff}}$. The step $\E[\eta_s\mid\{v_r\},\mathcal T]=0$ uses
exogeneity of the variance path; it would fail for an endogenous path that is a
function of past $\eta_s$ (e.g.\ GARCH), and likewise for weights constructed from
the same $\eta_s$, which is why $\{v_r\}$ is taken exogenous and
Definition~\ref{def:local} excludes data-dependent weights.

\emph{Corollary (i).} Uniform weights give
$|B_t|\le\tfrac1w\sum_{s\in W_t}L|s-t|/T=O(w/T)$ and
$\sum_s a_{t,s}^2=1/w$, so Theorem~\ref{thm:floor} yields $O(w^2/T^2+w^{-1})$,
minimised in order at $w\asymp T^{2/3}$. Since $v_t\ge v_{\min}>0$,
$|\sqrt{\hat v_t}-\sqrt{v_t}|=|\hat v_t-v_t|/(\sqrt{\hat v_t}+\sqrt{v_t})$, and
localisation to a neighbourhood of $v_t$ gives the stated delta-method rate.

\emph{Corollary (ii).} Substitute $v_s=\mu+\delta\zeta_{s-1}$:
$B_t=\delta\bigl(\sum_{s\in W_t}a_{t,s}\zeta_{s-1}-\zeta_{t-1}\bigr)$. The weighted
term is $\F_{t-2}\vee\mathcal T$-measurable (the weights are $\mathcal T$-measurable
and $s\le t-1$), whereas the target depends on
$\zeta_{t-1}$; since $\mathcal T$ is independent of the evaluation path, the
minimum-mean-squared-error property of conditional
expectation gives
$\E(B_t^2)\ge\delta^2\E\{\Var(\zeta_{t-1}\mid\F_{t-2})\}=\delta^2(1-\rho^2)$, the
equality from the Markov property and $\E(\zeta_{t-1}\mid\zeta_{t-2})=\rho\zeta_{t-2}$.
Adding the nonnegative innovation-variance term proves the floor. The one-step
index shift --- $v_t=\mu+\delta\zeta_{t-1}$ is $\F_{t-1}$-measurable --- is
essential for compatibility with the definition of $\It$. $\qed$

\clearpage
\setcounter{section}{0}
\setcounter{subsection}{0}
\setcounter{table}{0}
\setcounter{figure}{0}
\setcounter{equation}{0}
\renewcommand{\thesection}{S\arabic{section}}
\renewcommand{\thesubsection}{S\arabic{section}.\arabic{subsection}}
\renewcommand{\thetable}{S\arabic{table}}
\renewcommand{\thefigure}{S\arabic{figure}}
\renewcommand{\theequation}{S\arabic{equation}}
\numberwithin{equation}{section}
\definecolor{navyblue}{RGB}{26,26,78}
\definecolor{royalblue}{RGB}{44,95,165}

\begin{center}
  {\Large\bfseries Supplementary Material}\\[8pt]
  {\large\bfseries Estimating the Conditional Forecast-Revision Scale in
   Sequential Models:\\[2pt]
   Local-Smoothing Limits, Matched Models, and Cost--Accuracy Trade-offs}\\[10pt]
  {Hui Mean Foo \qquad Yuan-chin Ivan Chang}\\[3pt]
  {\itshape Institute of Statistical Science, Academia Sinica, Taipei, Taiwan}
\end{center}
\medskip

\noindent This supplement collects the proofs, the forget-gate study,
misspecification robustness, and reproducibility details. Section and equation
numbers with an ``S'' prefix refer to this supplement; unprefixed references
(e.g.\ Theorem~\ref{thm:floor}) are to the main paper.
\medskip

\section{The Forget-Gate Study}
\label{supp:gate}

At the free end of the estimator menu sits a tempting idea: a network trained for
one-step prediction already computes retention gates $f_t\in(0,1)$ at no extra
cost, so if $f_t$ encoded $\It$ it would be the cheapest estimator available. This
section tests that idea against exact ground truth and finds it wanting; the main
paper (Section~\ref{sec:S4}) records only the conclusion.

\textbf{Provenance.} The relationship $f_t \approx \It/\mathcal{I}_{t-1}$ was
advanced in our own earlier work on this program as a \emph{heuristic}, not a
theorem, and was neither established there nor tested against exact
history-specific ground truth. We therefore treat it as an empirical hypothesis to
be tested against the exact conditional $\It$, not as an inherited result.

\subsection{Derivation of the forget-gate proxy}
Under the hypothesis $f_{t,j} \approx \It/\mathcal{I}_{t-1}$, telescoping gives
$\It \approx \mathcal{I}_1 \prod_{s=2}^t f_{s,j}$, suggesting a recalibrated proxy
$\Ihat_{t,\mathrm{fg}} = \hat{a}\cdot\prod_{s=2}^t f_{s,j} + \hat{b}$ with
$(\hat{a},\hat{b})$ estimated by least squares on a small labelled set. The
cumulative product is numerically fragile for long $t$ (gates in $(0,1)$ drive it
to zero), so a windowed or log-domain version is used in practice.

\subsection{Gated-architecture comparison}
We train LSTM \citep{hochreiter1997long,gers2000learning} and GRU
\citep{cho2014learning} cells
only, across the three process types and $\rho(W_h)\in\{0.3,0.6,0.9\}$,
correlating the retention gate against the \emph{exact} conditional $\It$
(Gauss--Hermite ground truth, not a bootstrap target). We restrict to the LSTM and
GRU because the question here is purely one of estimation --- can a gate trained
for prediction be \emph{read} as an estimate of $\It$? Architectures whose
internal parameters are \emph{designed} to be input-selective raise a distinct
mechanistic question that an estimation paper is not the place to settle
(main-paper discussion).

\subsection{Findings: the mean-gate proxy is not supported}
\label{supp:gatefindings}
\textbf{Headline (18 configurations, exact ground truth).} The mean retention gate
$\bar f_t$ does \emph{not} track $\It$. Across the tracking processes (HMM, RS-AR)
the pooled Spearman correlation is $\approx 0$ for both the ratio form
$\bar f_t\!\sim\!\It/\mathcal I_{t-1}$ (median $|\rho|\!<\!0.05$) and the level form
$\bar f_t\!\sim\! g(\It)$ ($|\rho|\!<\!0.14$, level-proxy $R^2\!<\!0.15$;
Table~\ref{tab:s4}). Well-trained models are used: one-step
$\mathrm{MSE}/\mathrm{Var}\!\approx\!0.57$--$0.64$, so this is not an under-fitting
artifact. The mean-gate scan (Table~\ref{tab:s4}) spans all $18$ configurations
(three processes $\times$ two architectures $\times$ three spectral radii
$\rho(W_h)\in\{0.3,0.6,0.9\}$); the per-unit and linear-probe analyses below are run
at \emph{one representative radius} per state-driven (process, architecture) cell
--- HMM at $\rho(W_h)=0.6$, and RS-AR with LSTM at $0.6$ and GRU at $0.9$ --- i.e.\
the four cells of Tables~\ref{tab:s4unit} and~\ref{tab:s4probe}, not a pooling over
$\rho(W_h)$.

\textbf{Why, and the honest nuance.} The failure is partly an \emph{averaging}
effect, not gate inertness: individual units carry real dynamic range (per-unit
gate sd $\approx 0.14$--$0.34$) that cancels in the mean ($\bar f_t$ sd
$\approx 0.01$--$0.04$, against $\It$ sd $\approx 0.11$--$0.14$). A per-unit
analysis with a circular-shift permutation null (max over all $H=32$ units) asks
whether the \emph{best} unit tracks $\It$. Effect sizes must be read here, not
$p$-values: with $\sim\!5\times10^4$ pooled points the null is so tight (95th
percentile $|\rho|\approx0.01$--$0.02$) that every configuration is
``significant'' at a trivial correlation. Only \textbf{one of four}
configurations (HMM + LSTM) shows a single unit with a materially large
correlation ($|\rho|\!=\!0.48$; Table~\ref{tab:s4unit}); the others peak at
$0.03$--$0.11$. So the mechanism \emph{can} occur --- a lone LSTM unit does encode
conditional forecast uncertainty on the HMM --- but it is neither reliable across
architectures/processes nor recoverable from the averaged gate.

\begin{table}[t]
\centering
\small
\caption{The mean forget gate does not track $\It$. Pooled Spearman correlation
between the averaged retention gate and the \emph{exact} $\It$, in the ratio form
($\bar f_t\!\sim\!\It/\mathcal I_{t-1}$) and level form ($\bar f_t\!\sim\!\It$),
pooled over all test sequences and time points. The level-proxy $R^2$ bound is
reported in the preceding text. One-step prediction error
was $\mathrm{MSE}/\mathrm{Var}\approx 0.57$--$0.64$ throughout. AR(1) is omitted
($\It$ constant, correlation undefined).}
\label{tab:s4}
\begin{tabular}{llccc}
\toprule
Process & Arch & $\rho(W_h)$ & Spearman (ratio) & Spearman (level) \\
\midrule
\multirow{6}{*}{HMM}
 & LSTM & 0.3 & $\phantom{-}0.02$ & $\phantom{-}0.06$\\
 & LSTM & 0.6 & $\phantom{-}0.02$ & $\phantom{-}0.07$\\
 & LSTM & 0.9 & $\phantom{-}0.03$ & $\phantom{-}0.09$\\
 & GRU  & 0.3 & $-0.03$ & $-0.12$\\
 & GRU  & 0.6 & $-0.02$ & $-0.11$\\
 & GRU  & 0.9 & $-0.04$ & $-0.13$\\
\midrule
\multirow{6}{*}{RS-AR}
 & LSTM & 0.3 & $-0.01$ & $-0.01$\\
 & LSTM & 0.6 & $-0.01$ & $\phantom{-}0.01$\\
 & LSTM & 0.9 & $\phantom{-}0.00$ & $\phantom{-}0.01$\\
 & GRU  & 0.3 & $\phantom{-}0.00$ & $\phantom{-}0.02$\\
 & GRU  & 0.6 & $\phantom{-}0.00$ & $\phantom{-}0.01$\\
 & GRU  & 0.9 & $\phantom{-}0.07$ & $\phantom{-}0.22$\\
\bottomrule
\end{tabular}
\end{table}

\begin{table}[t]
\centering
\small
\caption{Per-unit analysis: does the \emph{best} hidden unit track $\It$?
Individual units carry real dynamic range (per-unit gate sd) that cancels in the
mean (averaged-gate sd), against the target's sd. ``$\max|\rho|$'' is the largest
over $H=32$ units, with the circular-shift permutation-null 95th percentile in
parentheses ($2000$ permutations). Every row is ``significant'' at the tight null
($p<0.001$), but only HMM+LSTM has a \emph{materially} large single unit.}
\label{tab:s4unit}
\begin{tabular}{llccc}
\toprule
Process & Arch & gate sd (unit / mean) & $\It$ sd & $\max|\rho|$ (null $p_{95}$) \\
\midrule
HMM   & LSTM & $0.20 \,/\, 0.02$ & $0.11$ & $\mathbf{0.48}$ \;\; ($0.01$)\\
HMM   & GRU  & $0.17 \,/\, 0.02$ & $0.11$ & $0.06$ \;\; ($0.01$)\\
RS-AR & LSTM & $0.14 \,/\, 0.04$ & $0.14$ & $0.03$ \;\; ($0.02$)\\
RS-AR & GRU  & $0.14 \,/\, 0.01$ & $0.14$ & $0.11$ \;\; ($0.02$)\\
\bottomrule
\end{tabular}
\end{table}

\subsection{The strongest control: a trained linear probe on the full gate vector}
\label{supp:probe}
The mean-gate result (Table~\ref{tab:s4}) and the per-unit result
(Table~\ref{tab:s4unit}) leave one objection open: perhaps $\It$ is carried by a
\emph{linear combination} of units that neither the mean nor any single unit
exposes. We test this in its strongest form. For each state-driven configuration we
fit an ordinary-least-squares \emph{linear probe}
$\It\approx\beta_0+\beta^\top f_t$ from the full $H=32$-dimensional retention-gate
vector to the exact $\It$ on one half of the test sequences, and score it on the
\emph{disjoint} other half, so a high score cannot be overfitting ($\sim\!27{,}000$
pooled training points, $33$ parameters). The probe recovers $\It$ well out of
sample (Table~\ref{tab:s4probe}): Spearman $0.86$--$0.94$ and $R^2$ up to $0.89$,
far above the averaged-gate proxy ($\approx0$ on the same split) and in the range
of the matched state-space estimator of the main paper.

The two-sided reading is the honest one. $\It$ \emph{is} linearly
present in the trained gate population --- so predictive training does encode $\It$
in the internal state --- but the naive, label-free read-outs (the averaged gate
and the $f_t\approx\It/\mathcal I_{t-1}$ ratio) do not recover it, and the probe
that does is \emph{supervised}: it must be fitted against a known $\It$, exactly the
quantity unavailable in application. The forget gate is therefore not a free
estimator, while the probe's success is a positive signal for a supervised decoder
trained on simulated ground truth, or an architecture built to expose $\It$
directly (left to future work).

\begin{table}[t]
\centering
\small
\caption{Linear-probe control. Out-of-sample tracking of the exact $\It$ by an OLS
probe on the full $32$-unit gate vector (fitted on half the sequences, scored on
the disjoint half), against the averaged-gate proxy on the same split. The gate
population linearly encodes $\It$; the mean does not.}
\label{tab:s4probe}
\resizebox{\textwidth}{!}{%
\begin{tabular}{llccc}
\toprule
Process & Arch & probe Spearman (OOS) & probe $R^2$ (OOS) & mean-gate Spearman \\
\midrule
HMM   & LSTM & $\mathbf{0.94}$ & $0.47$ & $\phantom{-}0.06$\\
HMM   & GRU  & $\mathbf{0.93}$ & $0.57$ & $-0.12$\\
RS-AR & LSTM & $\mathbf{0.88}$ & $0.89$ & $\phantom{-}0.02$\\
RS-AR & GRU  & $\mathbf{0.86}$ & $0.86$ & $\phantom{-}0.20$\\
\bottomrule
\end{tabular}}
\end{table}

\subsection{Network training protocol}
\label{supp:training}
So that the negative result cannot be blamed on undertraining, we state the
protocol explicitly. Each cell is a single-layer LSTM or GRU with $H=32$ hidden
units and a linear read-out, trained to minimise one-step squared prediction error
by Adam (learning rate $10^{-3}$, batch size $128$) for $30$ epochs on $800$
training sequences of length $150$; the recurrent weight is rescaled after every
optimiser step so that its spectral radius equals the grid value $\rho(W_h)$
exactly. Gates are read from an explicit per-timestep recurrence (not a fused
kernel), so the analysed gate is the one actually used for prediction. The
resulting one-step $\mathrm{MSE}/\mathrm{Var}\approx0.57$--$0.64$ confirms the
networks learned genuine predictive structure (a value near $1$ would indicate no
better than predicting the mean); the gate correlations are therefore read from
well-trained, not under-fitted, models. Early stopping was not needed at this width
and horizon --- training and held-out prediction error move together and plateau
--- and all seeds, the spectral-radius schedule and the full configuration are
pinned in the released run manifest.

\section{Misspecification Robustness of the Matched State-Space Estimator}
\label{supp:misspec}

Proposition~\ref{prop:achieve} is a correct-specification result. To bound its
scope we run the fitted two-state Gaussian state-space estimator against the exact
$\It$ of processes \emph{outside} its family (Table~\ref{tab:misspec}). The
Student-$t$ ground truth is computed by grid quadrature over the predictive
$t$-mixture and validated against Monte Carlo (median relative error $0.24\%$).
The estimator is robust to the misspecifications that leave the regime structure
identifiable --- heavy-tailed ($t_4$) emissions fitted by a Gaussian model
(Spearman $0.70$) and weakly separated regimes ($0.52$) both stay well above the
generic wall, and the correctly-specified reference ($0.67$) matches the main
comparison. Its advantage erodes when the state count is wrong: on a three-state
truth fitted by a two-state model it falls to $0.35$, though it still leads the
conditional-variance estimator ($0.27$). Thus a matched model degrades gracefully
under emission and separation error but requires roughly the right latent
dimension.

\begin{table}[ht]
\centering
\small
\caption{Robustness of the matched state-space estimator to misspecification
(median Spearman vs the \emph{exact} $\It$ of the \emph{true} process, $200$ paths,
$T=400$, with $95\%$ bootstrap intervals for \texttt{ssm}). Only the first row is
correctly specified; the rest fit a two-state Gaussian model to a process outside
its family. The generic estimators (A, cv) stay at the wall throughout.}
\label{tab:misspec}
\begin{tabular}{lrrc}
\toprule
True process (fit: 2-state Gaussian) & A & cv & ssm [95\% CI]\\
\midrule
Correct (2-state Gaussian)            & $-0.02$ & $-0.14$ & $\mathbf{0.67}$ {\scriptsize$[0.65,0.69]$}\\
Heavy-tailed emissions ($t_4$)        & $\phantom{-}0.04$ & $\phantom{-}0.00$ & $\mathbf{0.70}$ {\scriptsize$[0.67,0.72]$}\\
Weakly separated regimes ($|\mu|=0.5$)& $-0.01$ & $-0.30$ & $\mathbf{0.52}$ {\scriptsize$[0.42,0.58]$}\\
Under-specified (3-state truth)       & $\phantom{-}0.11$ & $\phantom{-}0.27$ & $\mathbf{0.35}$ {\scriptsize$[0.33,0.37]$}\\
\bottomrule
\end{tabular}
\end{table}

\section{Proof of the Block-Bootstrap Consistency Theorem}
\label{supp:bbproof}

We prove Theorem~\ref{thm:bbconsistency}. Write $\bar S_t=\tfrac1w\sum_{s\in W_t}
\hat D_s^2$ for the window second moment of the estimated revisions, and
$\bar S_t^0=\tfrac1w\sum_{s\in W_t}D_s^2$ for the same with the true revisions
$D_s=m_s-m_{s-1}$.

\emph{Step 1 (the bootstrap reproduces the window mean).} The bootstrap statistic
$(\Ihat_{t,T}^{\mathrm{boot}})^2$ is the circular block-bootstrap estimate of the
mean functional $\bar S_t$. For a sample mean the circular block bootstrap is
mean-unbiased conditional on the data, $\E^*[(\Ihat_{t,T}^{\mathrm{boot}})^2]=\bar S_t$,
with conditional variance $O(\ell/w)\to0$ under (A2)
\citep{kunsch1989jackknife,politis1994stationary}; letting $B\to\infty$ removes
the Monte-Carlo error, so $(\Ihat_{t,T}^{\mathrm{boot}})^2=\bar S_t+o_p(1)$. Thus
the bootstrap contributes a valid confidence band but, as a \emph{point} estimate,
only reproduces the window mean.

\emph{Step 2 (the predictor error is negligible).} By Lemma~\ref{lem:decomp}
applied termwise, $\hat D_s=D_s+(\varepsilon_s-\varepsilon_{s-1})$, whence
\[
  |\bar S_t-\bar S_t^0|
  \le\tfrac1w\textstyle\sum_{s\in W_t}\bigl(2|D_s|+|\Delta\varepsilon_s|\bigr)\,|\Delta\varepsilon_s|,
  \qquad \Delta\varepsilon_s:=\varepsilon_s-\varepsilon_{s-1},
\]
and by the Cauchy--Schwarz inequality with (A1) and (A3) --- writing
$\xi_T:=\max_{s\in W_t}\norm{\varepsilon_s}_{L^2}=o(1)$ for the predictor error
bound of (A3) --- $\E|\bar S_t-\bar S_t^0|\le C(\xi_T+\xi_T^2)\to0$. Hence
$\bar S_t=\bar S_t^0+o_p(1)$.

\emph{Step 3 (window mean of true squares $\to G(u)$).} Split
$\bar S_t^0=\tfrac1w\sum_{s\in W_t}g_s+\tfrac1w\sum_{s\in W_t}\zeta_s$ with
$\zeta_s:=D_s^2-g_s$. Because $g_s=\E[D_s^2\mid\F_{s-1}]$, the sequence
$(\zeta_s,\F_s)$ is a \emph{martingale difference} ($\E[\zeta_s\mid\F_{s-1}]=0$);
the $\zeta_s$ are therefore uncorrelated, and by (A1)
$\Var(\tfrac1w\sum_{s\in W_t}\zeta_s)=\tfrac1{w^2}\sum_{s\in W_t}\E\zeta_s^2=O(1/w)\to0$
--- so this centred term vanishes in $L^2$ \emph{without invoking mixing}. By local
stationarity $g_s=G(s/T)$ with $G$ Lipschitz, so
$\tfrac1w\sum_{s\in W_t}G(s/T)=G(u)+O(w/T)\to G(u)$. Hence
$\bar S_t^0\xrightarrow{p}G(u)$.

\emph{Combining} Steps 1--3,
$(\Ihat_{t,T}^{\mathrm{boot}})^2=\bar S_t+o_p(1)=\bar S_t^0+o_p(1)=G(u)+o_p(1)$;
since $G(u)>0$ and $x\mapsto\sqrt{x}$ is continuous there, the continuous-mapping
theorem gives $\Ihat_{t,T}^{\mathrm{boot}}\xrightarrow{p}\sqrt{G(u)}=\It$.
The bootstrap-validity part~(ii) follows under the additional local-CLT
assumption stated there --- the moving-block bootstrap central limit theorem for
the local mean; \citep{kunsch1989jackknife,lahiri2003resampling} establish the
stationary-case prototype, and (A2) alone is not claimed to yield the
triangular-array version. \hfill$\square$

\emph{Remarks on rigor.} Two ingredients are cited rather than proved here. (i) The
uniform predictor bound (A3) under dependent training data --- for the AR/GARCH
predictors it follows from standard quasi-MLE consistency, but a self-contained
statement is deferred. (ii) The $O(\ell/w)$ block-bootstrap variance in Step~1
rests on a long-run-variance argument that uses the summable mixing of (A2). The
martingale-difference argument of Step~3 --- the crux of the consistency --- is
exact and assumption-light.

\section{Bias Decomposition for the Lightweight Surrogates}
\label{supp:surrbias}

For Surrogate~A, put $r_{s,k}=\tilde D_{s,k}-D_s$ for the predictor error of the
$k$-window running mean. Expanding the square gives the exact identity
\[
  \hat v_{t,A}-v_t
  = \tfrac1w\textstyle\sum_s(D_s^2-v_s)
  + \tfrac1w\sum_s(v_s-v_t)
  + \tfrac2w\sum_s D_s r_{s,k}
  + \tfrac1w\sum_s r_{s,k}^2 .
\]
If $v_s=g(s/T)$ is $L$-Lipschitz and $\{\E(r_{s,k}^2\mid\F_{s-1})\}^{1/2}\le a_{k,T}$
uniformly, conditional Cauchy--Schwarz bounds the last two terms by
$2\sqrt{v_{\max}}\,a_{k,T}$ and $a_{k,T}^2$, so together with the local smoothing
bias
\[
  |\E(\hat v_{t,A}-v_t)|\ \le\ Lw/T + 2\sqrt{v_{\max}}\,a_{k,T} + a_{k,T}^2 .
\]
A local-constant predictor under a Lipschitz mean has the representative order
$a_{k,T}=O(k^{-1/2}+k/T)$, separating finite-window estimation noise from drift.

For Surrogate~B, the AR($p$) revision is $D_t=\phi_1\{X_t-\E(X_t\mid\F_{t-1})\}
=\phi_1\varepsilon_t$ (all lags except the newly observed $X_t$ are
$\F_{t-1}$-measurable), so $\It=|\phi_1|\sigma_t$ and, adding and subtracting
$|\hat\phi_1|\sigma_t$,
\[
  |\Ihat_{t,B}-\It|\ \le\ \sigma_t|\hat\phi_1-\phi_1|
  + |\phi_1|\,|\hat\sigma_t-\sigma_t|
  + |\hat\phi_1-\phi_1|\,|\hat\sigma_t-\sigma_t| .
\]
If $\hat\sigma_t^2$ is a lag-window estimate its error is the Surrogate~A
decomposition with $r_{s,k}=0$ and $D_s$ replaced by the innovation; if it comes
from a conditional-variance model, its error is governed by that model's parameter
and specification error instead.

\section{Estimator Trajectory on One Path}
\label{supp:traj}

Figure~\ref{fig:trajectory} makes the aggregate comparison of Table~\ref{tab:sim}
concrete on a single volatility-driven (AR--GARCH) path. The conditional-variance
estimator tracks the exact $\It$ closely because it models the mechanism that drives
it, whereas the block bootstrap recovers only a smoothed, lagged local average. This
single-path view locates the bootstrap's weakness in a structural lag rather than
sampling noise, and is the mechanism behind its Pareto-dominated point accuracy on
the cost--accuracy frontier of Figure~\ref{fig:frontier}(a) --- consistent with its
role as an interval, not a point, estimator.

\begin{figure}[ht]
\centering
\includegraphics[width=\linewidth]{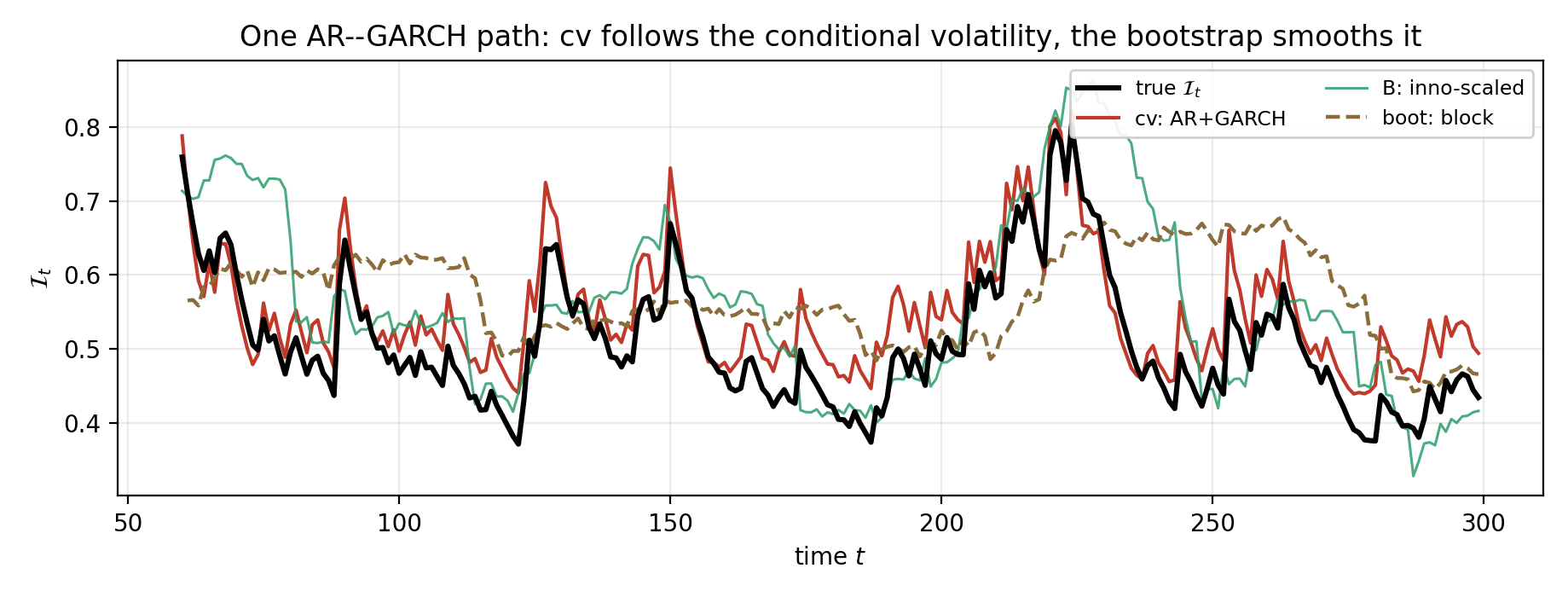}
\caption{One AR--GARCH path: the exact $\It$ (black) against three estimates. The
conditional-variance estimator (red) follows the conditional volatility that drives
$\It$; the innovation-scaled surrogate (green) tracks it more noisily; the block
bootstrap (dashed) recovers only a smoothed, lagged local average --- the mechanism
behind its poor single-path tracking in Figure~\ref{fig:frontier}(a).}
\label{fig:trajectory}
\end{figure}

\section{Adaptive Block-Length Selection}
\label{supp:blocklen}

The block length $\ell$ affects the estimated sampling distribution, not the local
point estimate, so it should \emph{not} be chosen by minimising point-estimate MSE
over $\ell$ (which is essentially unchanged once the Monte-Carlo error is
negligible). We therefore treat $\ell$ as a \emph{sensitivity} parameter rather than reporting a
single selected value: it is swept over a stable grid $1\le\ell\le c\,w^{1/2}$,
assessed by the long-run variance of $Y_s=\hat D_s^2$ within each window, with
interval length reported across the grid and the point estimate unchanged
throughout. If an automatic choice is wanted, any ``minimum-volatility''
selector must be justified as a coverage or long-run-variance criterion, not a
point-MSE one.

\section{Reproducibility and Computational Details}
\label{supp:repro}

\emph{Two reproducibility standards, because two kinds of claim appear in the
paper. Statistical results} (bias, variance, MSE, and the gate correlations) are
pinned by seeds and are machine-independent: every stochastic component derives its
seed from a single master seed via a stable hash of a descriptive label, so any one
configuration reproduces in isolation. These numbers should not move across
machines.

\emph{Cost results} are different. The central framing is a cost-ordered menu, and
the ordering is justified by wall-clock timings, which are inherently
machine-dependent. We therefore state the hardware wherever a timing appears (all
timings on a 12-core Intel Xeon~W-3235 at $3.30$\,GHz, $192$\,GB RAM, macOS;
Python/PyTorch/NumPy versions and BLAS backends recorded in the released run
manifest). Each reported cost is the median wall-clock time over repeated timings,
with the per-run spread recorded in the manifest. The expanded $200$-path
comparison uses process-level parallelism with
one math-library thread per worker; the per-series costs reported in
Table~\ref{tab:sim} are timed \emph{serially}, without that parallel load, so they
are not inflated by contention. A referee on a different machine will reproduce the
\emph{ordering} and the relative costs, not the absolute seconds.

\textbf{Honest limit of seeding.} Identical seeds give bitwise-identical output only
on the same architecture and thread count: floating-point reduction order differs
across BLAS backends (here PyTorch links MKL, NumPy links OpenBLAS), so
cross-machine agreement is to numerical tolerance, not bit-for-bit. Thread counts
are pinned for this reason. All code, random seeds, the exact synthetic series, the
trained network weights (or equivalently the seeds that regenerate them bit-for-bit
on the same machine), and the machine manifest are released in a public repository,
so every reported number can be reproduced and independently re-examined.

\section{Submission-Level Scaling Study}
\label{supp:scaling}

Table~\ref{tab:scaling} in the main paper reports median Spearman across series
lengths; here we give the full detail behind it --- the $95\%$ Monte-Carlo
intervals (Table~\ref{tab:s8spearman}) and the calibrated relative RMSE
(Table~\ref{tab:s8rmse}) --- over $200$ independent paths at each
$T\in\{200,400,800,1600,3200\}$ for the three tracking processes. Two patterns are
worth isolating. First, the \emph{matched} models are consistent: on the
volatility-driven AR--GARCH the conditional-variance estimator's calibrated RMSE
falls from $0.23$ at $T=200$ to $0.05$ at $T=3200$, and on the state-driven
processes the state-space estimator's falls from $\approx0.6$--$0.7$ to
$0.14$--$0.16$, with Spearman rising toward one. Second, the \emph{generic}
estimators are floored: on the HMM and RS-AR every lag-only estimator keeps a
calibrated RMSE indistinguishable from $1.00$ and a Spearman within Monte-Carlo
error of zero \emph{at all $T$} --- more data does not help, the empirical
signature of the lag-only floor of Corollary~\ref{cor:dichotomy}(ii). The one
non-monotone entry, the state-space estimator on AR--GARCH at $T=200$ (Spearman
$0.40$), is the small-sample cost of fitting a four-parameter-per-regime switching
model to a short series; it is resolved by $T=400$ ($0.91$) and is why the cheap
conditional-variance model, not the state-space model, is the recommendation for
volatility-driven $\It$.

\begin{table}[ht]
\centering
\small
\caption{Scaling of tracking accuracy: median Spearman vs the exact $\It$ over
$200$ paths, with $95\%$ percentile-bootstrap intervals, at each series length.}
\label{tab:s8spearman}
\resizebox{\textwidth}{!}{%
\begin{tabular}{llccccc}
\toprule
Process & Estimator & $T{=}200$ & $400$ & $800$ & $1600$ & $3200$\\
\midrule
 \multirow{5}{*}{AR--GARCH} & A: windowed & $0.57$ {\scriptsize$[0.53,0.62]$} & $0.62$ {\scriptsize$[0.60,0.64]$} & $0.65$ {\scriptsize$[0.62,0.66]$} & $0.65$ {\scriptsize$[0.64,0.66]$} & $0.66$ {\scriptsize$[0.65,0.66]$} \\
  & B: inno-scaled & $0.79$ {\scriptsize$[0.77,0.81]$} & $0.81$ {\scriptsize$[0.79,0.81]$} & $0.83$ {\scriptsize$[0.83,0.84]$} & $0.84$ {\scriptsize$[0.83,0.84]$} & $0.84$ {\scriptsize$[0.84,0.85]$} \\
  & cv: GARCH & $0.97$ {\scriptsize$[0.96,0.98]$} & $0.98$ {\scriptsize$[0.98,0.99]$} & $0.99$ {\scriptsize$[0.99,0.99]$} & $1.00$ {\scriptsize$[1.00,1.00]$} & $1.00$ {\scriptsize$[1.00,1.00]$} \\
  & boot: block & $0.44$ {\scriptsize$[0.41,0.47]$} & $0.51$ {\scriptsize$[0.50,0.53]$} & $0.55$ {\scriptsize$[0.54,0.57]$} & $0.57$ {\scriptsize$[0.56,0.58]$} & $0.59$ {\scriptsize$[0.59,0.60]$} \\
  & ssm: state-space & $0.40$ {\scriptsize$[0.29,0.56]$} & $0.91$ {\scriptsize$[0.88,0.93]$} & $0.97$ {\scriptsize$[0.96,0.97]$} & $0.98$ {\scriptsize$[0.97,0.98]$} & $0.98$ {\scriptsize$[0.98,0.98]$} \\
\midrule
 \multirow{5}{*}{HMM} & A: windowed & $-0.00$ {\scriptsize$[-0.02,0.01]$} & $-0.01$ {\scriptsize$[-0.02,-0.00]$} & $-0.01$ {\scriptsize$[-0.01,-0.00]$} & $-0.01$ {\scriptsize$[-0.02,-0.01]$} & $-0.01$ {\scriptsize$[-0.02,-0.01]$} \\
  & B: inno-scaled & $-0.01$ {\scriptsize$[-0.03,-0.00]$} & $-0.03$ {\scriptsize$[-0.04,-0.02]$} & $-0.02$ {\scriptsize$[-0.03,-0.02]$} & $-0.03$ {\scriptsize$[-0.03,-0.02]$} & $-0.02$ {\scriptsize$[-0.03,-0.02]$} \\
  & cv: GARCH & $-0.15$ {\scriptsize$[-0.19,-0.12]$} & $-0.15$ {\scriptsize$[-0.19,-0.11]$} & $-0.13$ {\scriptsize$[-0.16,-0.11]$} & $-0.12$ {\scriptsize$[-0.15,-0.11]$} & $-0.12$ {\scriptsize$[-0.13,-0.11]$} \\
  & boot: block & $-0.01$ {\scriptsize$[-0.03,0.01]$} & $-0.01$ {\scriptsize$[-0.02,-0.00]$} & $-0.02$ {\scriptsize$[-0.03,-0.01]$} & $-0.02$ {\scriptsize$[-0.02,-0.01]$} & $-0.02$ {\scriptsize$[-0.02,-0.01]$} \\
  & ssm: state-space & $0.60$ {\scriptsize$[0.58,0.63]$} & $0.65$ {\scriptsize$[0.64,0.67]$} & $0.71$ {\scriptsize$[0.70,0.73]$} & $0.76$ {\scriptsize$[0.75,0.78]$} & $0.80$ {\scriptsize$[0.79,0.82]$} \\
\midrule
 \multirow{5}{*}{RS-AR} & A: windowed & $0.11$ {\scriptsize$[0.08,0.14]$} & $0.14$ {\scriptsize$[0.11,0.16]$} & $0.14$ {\scriptsize$[0.13,0.15]$} & $0.15$ {\scriptsize$[0.14,0.16]$} & $0.15$ {\scriptsize$[0.14,0.15]$} \\
  & B: inno-scaled & $0.01$ {\scriptsize$[-0.02,0.03]$} & $0.00$ {\scriptsize$[-0.01,0.01]$} & $0.01$ {\scriptsize$[0.00,0.02]$} & $-0.00$ {\scriptsize$[-0.01,0.01]$} & $0.00$ {\scriptsize$[-0.00,0.01]$} \\
  & cv: GARCH & $0.12$ {\scriptsize$[0.10,0.15]$} & $0.15$ {\scriptsize$[0.13,0.16]$} & $0.14$ {\scriptsize$[0.12,0.15]$} & $0.15$ {\scriptsize$[0.14,0.15]$} & $0.17$ {\scriptsize$[0.15,0.17]$} \\
  & boot: block & $-0.02$ {\scriptsize$[-0.06,0.04]$} & $-0.01$ {\scriptsize$[-0.03,0.02]$} & $0.01$ {\scriptsize$[-0.00,0.03]$} & $-0.00$ {\scriptsize$[-0.01,0.01]$} & $0.00$ {\scriptsize$[-0.01,0.01]$} \\
  & ssm: state-space & $0.60$ {\scriptsize$[0.53,0.67]$} & $0.82$ {\scriptsize$[0.79,0.85]$} & $0.94$ {\scriptsize$[0.93,0.95]$} & $0.97$ {\scriptsize$[0.97,0.98]$} & $0.99$ {\scriptsize$[0.98,0.99]$} \\
\bottomrule
\end{tabular}}
\end{table}

\begin{table}[ht]
\centering
\small
\caption{Scaling of the \emph{per-path oracle-calibrated} relative RMSE
(median over $200$ paths) versus the exact $\It$. This is a shape diagnostic, not
the independently calibrated metric reported in main-paper Table~\ref{tab:sim}.
Matched-model error falls with $T$; generic-estimator error stays near one on the
state-driven processes.}
\label{tab:s8rmse}
\begin{tabular}{llccccc}
\toprule
Process & Estimator & $T{=}200$ & $400$ & $800$ & $1600$ & $3200$\\
\midrule
 \multirow{5}{*}{AR--GARCH} & A: windowed & $0.81$ & $0.78$ & $0.76$ & $0.74$ & $0.73$ \\
  & B: inno-scaled & $0.60$ & $0.58$ & $0.54$ & $0.53$ & $0.52$ \\
  & cv: GARCH & $0.23$ & $0.16$ & $0.10$ & $0.08$ & $0.05$ \\
  & boot: block & $0.90$ & $0.86$ & $0.82$ & $0.81$ & $0.79$ \\
  & ssm: state-space & $0.88$ & $0.55$ & $0.46$ & $0.43$ & $0.44$ \\
\midrule
 \multirow{5}{*}{HMM} & A: windowed & $1.00$ & $1.00$ & $1.00$ & $1.00$ & $1.00$ \\
  & B: inno-scaled & $1.00$ & $1.00$ & $1.00$ & $1.00$ & $1.00$ \\
  & cv: GARCH & $1.00$ & $1.00$ & $1.00$ & $1.00$ & $1.00$ \\
  & boot: block & $1.00$ & $1.00$ & $1.00$ & $1.00$ & $1.00$ \\
  & ssm: state-space & $0.58$ & $0.39$ & $0.30$ & $0.23$ & $0.14$ \\
\midrule
 \multirow{5}{*}{RS-AR} & A: windowed & $0.99$ & $0.99$ & $0.99$ & $0.99$ & $0.99$ \\
  & B: inno-scaled & $0.99$ & $1.00$ & $1.00$ & $1.00$ & $1.00$ \\
  & cv: GARCH & $0.97$ & $0.96$ & $0.96$ & $0.96$ & $0.96$ \\
  & boot: block & $0.99$ & $1.00$ & $1.00$ & $1.00$ & $1.00$ \\
  & ssm: state-space & $0.73$ & $0.54$ & $0.31$ & $0.23$ & $0.16$ \\
\bottomrule
\end{tabular}
\end{table}

\section{Uncalibrated and Independently Calibrated Error}
\label{supp:rawmetrics}

The main paper's Table~\ref{tab:sim} scores point accuracy by an \emph{independently
calibrated} RMSE --- one affine map fitted on half of the $200$ paths and scored on
the disjoint half (a single $100/100$ split, not two-fold cross-fitting; the value
reported is the median over the $100$ held-out paths) --- whereas the scaling
table~\ref{tab:s8rmse} uses a
\emph{per-path best-affine} (oracle) calibrated RMSE. The latter isolates
trajectory \emph{shape} but gives each estimate oracle access to the target's
per-path scale and location, so it can flatter an estimator whose shape is right
but whose scale is not. Table~\ref{tab:rawmetrics} recomputes the full battery on
the \emph{same} $200$ paths and exact $\It$ at $T=400$: its Spearman and
independently-calibrated columns reproduce Table~\ref{tab:sim}, and its
oracle-calibrated column reproduces the $T=400$ entries of Table~\ref{tab:s8rmse}
(identical seeding and burn-in), which validates the added columns. The battery
adds the raw (uncalibrated) relative RMSE, normalised both by
$\sigma_{\It}=\mathrm{sd}(\It)$ --- comparable to the calibrated column --- and by
$\mathrm{rms}(\It)$, a level-relative error; the signed level error
$(\overline{\Ihat}-\overline{\It})/\overline{\It}$; and the independently
calibrated RMSE described above.

Two readings follow, both honest. \emph{The matched models are level-unbiased and
shape-accurate.} On AR--GARCH the conditional-variance estimator, and on the
state-driven processes the state-space estimator, carry a signed level error of at
most a few percent (last column) and a small level-relative raw error
($\mathrm{raw}/\mathrm{rms}\,\It\approx0.10$--$0.16$), matching their high Spearman:
they recover both the \emph{level} and the \emph{shape} of $\It$.
\emph{But the per-path calibrated RMSE is optimistic.} For the
conditional-variance estimator on AR--GARCH the oracle-calibrated error
($0.16$, $\sigma$-normalised) is far below both the uncalibrated ($0.55$) and the
\emph{independently} calibrated ($0.53$) values --- a single affine map learned on
separate paths recovers almost none of the oracle gain. The estimator's robust,
un-flattered virtues are therefore its rank tracking (Spearman $0.98$) and its
near-zero level bias; the small calibrated RMSE itself reflects per-path scale
fitting and should not be read as an out-of-sample pointwise accuracy. On the
state-driven processes every generic estimator sits at $\mathrm{raw}/\sigma_{\It}>1$
with a $20$--$46\%$ level error, and only the matched state-space estimator is
level-accurate, so the ``wall'' of the main text is present in the raw battery too
and is not an artifact of calibration. Finally, the AR(1) level check (bottom
block) exposes what the calibrated metric hides: the model-free windowed
Surrogate~A underestimates the constant $\It$ by about one half, whereas the
model-based estimators are within a few percent.

\begin{table}[ht]
\centering
\small
\caption{Uncalibrated error battery on the \emph{same} $200$ paths and exact
$\It$ at $T=400$ as Table~\ref{tab:sim}. ``cal.\ RMSE (oracle)'' is the
per-path best-affine shape diagnostic also used in Table~\ref{tab:s8rmse};
``cal.\ RMSE (indep.)'' is the out-of-sample metric reported in the main paper,
learning one affine map on $100$ paths and scoring it on the disjoint $100$ (median
over the held-out half); ``raw
RMSE'' is uncalibrated, normalised by $\sigma_{\It}=\mathrm{sd}(\It)$ and by
$\mathrm{rms}(\It)$; ``level error'' is the signed relative level bias
$(\overline{\Ihat}-\overline{\It})/\overline{\It}$. All summaries are medians over
the $200$ paths, except cal.\ RMSE (indep.), which is the median over the $100$
held-out paths. Boldface marks the matched estimator per process. AR(1) is a level check
($\It$ constant $\Rightarrow$ Spearman and $\sigma$-normalised errors undefined).}
\label{tab:rawmetrics}
\resizebox{\textwidth}{!}{%
\begin{tabular}{llcccccc}
\toprule
 & & & \multicolumn{2}{c}{cal.\ RMSE} & \multicolumn{2}{c}{raw RMSE} & level\\
\cmidrule(lr){4-5}\cmidrule(lr){6-7}
Process & Estimator & Spearman & oracle & indep. & $/\sigma_{\It}$ & $/\mathrm{rms}\,\It$ & error\\
\midrule
\multirow{5}{*}{AR--GARCH}
 & A: windowed      & $0.62$ & $0.78$ & $0.80$ & $2.26$ & $0.40$ & $-0.37$\\
 & B: inno-scaled   & $0.81$ & $0.58$ & $0.64$ & $0.87$ & $0.16$ & $-0.03$\\
 & \textbf{cv}: GARCH & $\mathbf{0.98}$ & $\mathbf{0.16}$ & $0.53$ & $0.55$ & $0.10$ & $-0.02$\\
 & boot: block      & $0.51$ & $0.86$ & $0.89$ & $1.04$ & $0.19$ & $-0.03$\\
 & ssm: state-space & $0.91$ & $0.55$ & $0.79$ & $0.91$ & $0.16$ & $-0.03$\\
\midrule
\multirow{5}{*}{HMM}
 & A: windowed      & $-0.01$ & $1.00$ & $1.00$ & $2.44$ & $0.40$ & $-0.35$\\
 & B: inno-scaled   & $-0.03$ & $1.00$ & $1.00$ & $1.92$ & $0.32$ & $+0.22$\\
 & cv: GARCH        & $-0.15$ & $1.00$ & $1.00$ & $1.74$ & $0.28$ & $+0.23$\\
 & boot: block      & $-0.01$ & $1.00$ & $1.00$ & $1.78$ & $0.29$ & $+0.22$\\
 & \textbf{ssm}: state-space & $\mathbf{0.65}$ & $\mathbf{0.39}$ & $0.55$ & $0.59$ & $0.10$ & $-0.02$\\
\midrule
\multirow{5}{*}{RS-AR}
 & A: windowed      & $0.14$ & $0.99$ & $1.00$ & $2.40$ & $0.51$ & $-0.46$\\
 & B: inno-scaled   & $0.00$ & $1.00$ & $1.00$ & $1.26$ & $0.26$ & $+0.06$\\
 & cv: GARCH        & $0.15$ & $0.96$ & $0.99$ & $1.06$ & $0.22$ & $+0.07$\\
 & boot: block      & $-0.01$ & $1.00$ & $1.00$ & $1.15$ & $0.24$ & $+0.06$\\
 & \textbf{ssm}: state-space & $\mathbf{0.82}$ & $\mathbf{0.54}$ & $0.62$ & $0.73$ & $0.16$ & $-0.02$\\
\midrule
\multirow{5}{*}{\shortstack[l]{AR(1)\\(level check)}}
 & A: windowed      & --- & --- & --- & --- & $0.51$ & $-0.50$\\
 & B: inno-scaled   & --- & --- & --- & --- & $0.15$ & $-0.03$\\
 & cv: GARCH        & --- & --- & --- & --- & $0.06$ & $-0.02$\\
 & boot: block      & --- & --- & --- & --- & $0.10$ & $-0.03$\\
 & ssm: state-space & --- & --- & --- & --- & $0.10$ & $-0.04$\\
\bottomrule
\end{tabular}}
\end{table}

\section{Proof of the Matched-Model Achievability Result}
\label{supp:achieve}

We give a detailed proof of Proposition~\ref{prop:achieve}; the single step not
expanded in full --- geometric stability of the tangent filter in Step~4 --- is a
standard filter-differentiability property under the regularity of Douc et
al.\ \citep{douc2004asymptotic}. Throughout,
$\theta$ collects the transition matrix $P$ and the regime-specific parameters
$\{\mu_k,\phi_k,\sigma_k\}_{k=1}^{K}$ of the $K$-state Markov-switching
autoregression
\begin{equation}
  X_t=\mu_{S_t}+\phi_{S_t}X_{t-1}+\sigma_{S_t}\varepsilon_t,\qquad
  \varepsilon_t\stackrel{\text{iid}}{\sim}\mathcal N(0,1),
  \label{eq:msar}
\end{equation}
with $(S_t)$ a stationary Markov chain on $\{1,\dots,K\}$ with transition matrix
$P$, independent of $(\varepsilon_t)$. Let $\Theta$ be a compact parameter set and
$\theta_0\in\operatorname{int}\Theta$ the data-generating value. We use four
regularity conditions, all standard for Markov-switching autoregressions; they
hold for the two-state designs of the main paper, whose transition matrix
$\big[\begin{smallmatrix}0.9&0.1\\0.1&0.9\end{smallmatrix}\big]$ is strictly
positive.

\begin{itemize}[leftmargin=1.4em]
\item[(SS1)] \emph{Stationarity and moments.} $P_0$ is irreducible and
aperiodic with stationary distribution $\pi_0$ of full support, and the
Markov-modulated recursion \eqref{eq:msar} has negative top Lyapunov exponent
(e.g.\ $\sum_k\pi_{0,k}\phi_{0,k}^2<1$). Then $(S_t,X_t)$ is stationary and
geometrically ergodic; we assume in addition $\E X_t^4<\infty$. The
forecast-revision variance $v_t:=\It^2$ is then a stationary, integrable process of
the form $v_t=O(1+X_{t-1}^2)$ --- almost surely finite but \emph{not} uniformly
bounded --- and we do not require it bounded away from zero (the final square-root
step uses $|\sqrt a-\sqrt b|\le\sqrt{|a-b|}$).
\item[(SS2)] \emph{Identifiability.} $\theta_0$ is identifiable from the law of
$(X_t)$ up to permutation of the state labels.
\item[(SS3)] \emph{MLE regularity.} The emission densities are positive and
twice continuously differentiable in $\theta$ on the compact $\Theta$, and the
conditions of Douc et al. \citep{douc2004asymptotic} for strong consistency of
the maximum
likelihood estimator in autoregressive models with Markov regime hold.
\item[(SS4)] \emph{Filter minorization.} $\min_{j,k}(P_0)_{jk}\ge\epsilon_->0$,
and hence $\min_{j,k}(P)_{jk}>0$ on a neighbourhood $U\ni\theta_0$.
\end{itemize}

Let $\hat\theta_T=\arg\max_{\theta\in\Theta}\ell_T(\theta)$ be the offline
(full-sample) MLE, $\ell_T$ the exact log-likelihood evaluated by the Hamilton
filter, and fix $u\in(0,1)$, $t=\lfloor uT\rfloor$.

\paragraph{Step 1: strong consistency of the MLE}
Under (SS1)--(SS3), Douc et al. \citep{douc2004asymptotic} give
$\hat\theta_T\to\theta_0$ almost surely, up to a permutation of the state labels.
Every quantity below --- the filtered forecast-revision variance $v_t$ --- is a
functional of the \emph{observable} predictive law and is therefore invariant
under relabelling of the latent states; fixing a labelling we treat
$\hat\theta_T\to_{\mathrm{a.s.}}\theta_0$.

\paragraph{Step 2: the filtered functional}
For a parameter $\theta$ and an initial law $\xi_0$ on $\{1,\dots,K\}$, the
Hamilton prediction filter is
\begin{equation}
  \alpha_t(\theta)=P^\top\xi_{t-1}(\theta),\qquad
  \xi_t(\theta)\propto \operatorname{diag}\!\big\{g_\theta(X_t\mid X_{t-1},k)\big\}_k\,
  P^\top\xi_{t-1}(\theta),
  \label{eq:filter}
\end{equation}
with $g_\theta(x\mid x',k)=\sigma_k^{-1}\varphi\big((x-\mu_k-\phi_k x')/\sigma_k\big)$
the Gaussian emission density. The forecast-revision variance is the functional
\begin{equation}
  v_t(\theta;\xi_0)
  =\Var_\theta\!\big(\E_\theta[X_{t+1}\mid\F_t]\mid\F_{t-1}\big)
  =\Phi\big(\alpha_t(\theta),X_{t-1};\theta\big),
  \label{eq:vfunctional}
\end{equation}
where $\Phi$ maps the predictive state distribution $\alpha_t$, the previous
observation $X_{t-1}$, and $\theta$ to the variance of
$m_\theta(X_t)=\E_\theta[X_{t+1}\mid\F_{t-1},X_t]$ under the predictive mixture
$X_t\mid\F_{t-1}\sim\sum_k\alpha_{t,k}\,\mathcal N(\mu_k+\phi_k X_{t-1},\sigma_k^2)$.
A finite-Gaussian-mixture computation shows $\Phi$ is $C^1$ in $(\alpha,\theta)$
and locally Lipschitz, with constants uniform on compacts and, in $\alpha$, of the
form $L_\Phi(X_{t-1})=O(1+X_{t-1}^2)$.

\paragraph{Step 3: geometric forgetting of the filter}
The normalised update \eqref{eq:filter} is $\xi\mapsto \operatorname{diag}(g)\,
P_\theta^\top\xi$ up to scaling. In the Hilbert projective metric $d_H$ on the
positive cone, a positive diagonal map is an isometry ---
$(\!\operatorname{diag}(g)\xi)_i/(\!\operatorname{diag}(g)\xi')_i=\xi_i/\xi'_i$ ---
so the contraction is governed by the positive matrix $P_\theta^\top$ alone. By the
classical Birkhoff--Hopf theorem the map at $\theta$ has contraction coefficient
$\tau(P_\theta)=\frac{1-\sqrt{\varpi_\theta}}{1+\sqrt{\varpi_\theta}}<1$, where
$\varpi_\theta=\min_{i,j,k,l}\frac{(P_\theta)_{ik}(P_\theta)_{jl}}{(P_\theta)_{jk}(P_\theta)_{il}}>0$;
by (SS4) and continuity of $\theta\mapsto\tau(P_\theta)$ we shrink $U$ so that
$\rho:=\sup_{\theta\in U}\tau(P_\theta)<1$. Hence, for any two initialisations
$\xi_0,\xi_0'$ and all $\theta\in U$,
\begin{equation}
  d_H\!\big(\xi_t(\theta;\xi_0),\xi_t(\theta;\xi_0')\big)\le \rho^{\,t}\,
  d_H(\xi_0,\xi_0'),\qquad
  \|\xi_t(\theta;\xi_0)-\xi_t(\theta;\xi_0')\|_{\mathrm{TV}}\le C\,\rho^{t},
  \label{eq:forget}
\end{equation}
with $\rho<1$ and $C<\infty$ (the second bound is the standard consequence of the
first). This forgetting is \emph{uniform} in $\theta\in U$ and
in the data, and requires no moment condition, because the emission factor drops
out of $d_H$. Writing $\bar\xi_t(\theta)$ for the stationary filter (the a.s.\
limit as the initialisation recedes to $-\infty$) and
$\bar v_t(\theta)=\Phi(\bar\alpha_t(\theta),X_{t-1};\theta)$, and combining
\eqref{eq:forget} with the $\alpha$-Lipschitz bound of Step~2,
\[
  |v_t(\theta;\xi_0)-\bar v_t(\theta)|\le L_\Phi(X_{t-1})\,C\rho^{\,t}.
\]
Since $\E L_\Phi(X_{t-1})<\infty$ (SS1) and $\sum_t\rho^t\E L_\Phi(X_{t-1})<\infty$,
Borel--Cantelli gives $L_\Phi(X_{t-1})\rho^t\to0$ a.s., so the initialisation error
is $o_{\mathrm{a.s.}}(1)$ as $t=\lfloor uT\rfloor\to\infty$.

\paragraph{Step 4: uniform Lipschitz continuity in $\theta$, and plug-in}
Differentiating \eqref{eq:filter} in $\theta$, the tangent filter
$\partial_\theta\xi_t$ solves a linear recursion whose homogeneous part is the same
positive map as in \eqref{eq:forget}; it is therefore geometrically stable and,
driven by the $C^1$ (SS3) and $L^2$-bounded (SS1, via $\E X_t^4<\infty$) score
increments, has stationary integrable norm. (A fully explicit treatment of this
inhomogeneous derivative recursion and the domination of the parameter-derivative
terms is standard for Markov-switching filters under the regularity of Douc et
al.\ \citep{douc2004asymptotic}; we take it as given here.) With the smoothness of $\Phi$ in $(\alpha,\theta)$ from Step~2,
$\theta\mapsto\bar v_t(\theta)$ is Lipschitz on $U$ with a stationary constant
$\Lambda_t$ satisfying $\sup_t\E\Lambda_t<\infty$, hence $\Lambda_t=O_p(1)$.
Therefore
\begin{equation}
  |\bar v_t(\hat\theta_T)-\bar v_t(\theta_0)|\le
  \Lambda_t\,\|\hat\theta_T-\theta_0\|
  = O_p(1)\cdot o_{\mathrm{a.s.}}(1)=o_p(1),
  \label{eq:plugin}
\end{equation}
the last equality because $\|\hat\theta_T-\theta_0\|\to0$ a.s.\ (Step~1) while
$\Lambda_{\lfloor uT\rfloor}=O_p(1)$.

\paragraph{Step 5: correct specification}
At $\theta_0$ the stationary Hamilton filter is exact:
$\bar\xi_{t-1}(\theta_0)=P(S_{t-1}=\cdot\mid\F_{t-1})$, the true filtering
distribution, so $\bar\alpha_t(\theta_0)=P^\top\bar\xi_{t-1}(\theta_0)
=P(S_t=\cdot\mid\F_{t-1})$ is the true predictive state law and,
by \eqref{eq:vfunctional} and Definition~\ref{def:It},
\[
  \bar v_t(\theta_0)=\Var\!\big(\E[X_{t+1}\mid\F_t]\mid\F_{t-1}\big)=\It^2 .
\]
(The gap between the infinite-past filter and the filter started at $t=1$ is the
same $O(\rho^t)$ term already controlled in Step~3.)

\paragraph{Conclusion}
Chaining Steps~3--5,
\[
  v_t(\hat\theta_T;\xi_0)=\bar v_t(\hat\theta_T)+o_{\mathrm{a.s.}}(1)
  =\bar v_t(\theta_0)+o_p(1)=\It^2+o_p(1).
\]
Since $|\sqrt a-\sqrt b|\le\sqrt{|a-b|}$ for $a,b\ge0$, the $o_p(1)$ bound on
$v_t(\hat\theta_T;\xi_0)-\It^2$ transfers directly to
$\Ihat_t^{\mathrm{ssm}}=\sqrt{v_t(\hat\theta_T;\xi_0)}\xrightarrow{p}\It$, with no
positivity floor required. Finally,
$\Ihat_t^{\mathrm{ssm}}$ is a global nonlinear functional of the whole series
(through $\hat\theta_T$ and the full filter recursion), not a convex combination
of $\{D_s^2\}_{s<t}$ with externally fixed weights, so it lies outside
Definition~\ref{def:local} and the floor of Corollary~\ref{cor:dichotomy}(ii)
does not bind. \hfill$\square$

\begin{remark}[Scope: offline versus online]
The estimator is \emph{offline}: $\hat\theta_T$ uses the whole series, so the
statement is one of retrospective ($u$ fixed, $T\to\infty$) consistency. An online
estimator that refits using only $\F_{t-1}$ and filters forward is a distinct
object; the same five steps apply with $\hat\theta_{t-1}$ replacing $\hat\theta_T$
whenever the MLE is consistent on the growing window, but the finite-sample
behaviour differs and we do not analyse it. Correct specification is essential in
Step~5; Section~\ref{supp:misspec} quantifies the deviation of $\bar v_t(\theta_*)$
from $\It^2$ under misspecification, where $\theta_*$ is the Kullback--Leibler
projection of the truth onto the fitted family.
\end{remark}

\section{Proof of the Gaussian Information Identity}
\label{supp:miproof}

We give the term-by-term proof of Proposition~\ref{prop:mi}. Write
$m_s=\E[X_{t+1}\mid\F_s]$, $\sigma^2_{\mathrm{post},t}=\Var(X_{t+1}\mid\F_t)$
(assumed $\F_{t-1}$-measurable), and
$\sigma^2_{\mathrm{prior},t}=\Var(X_{t+1}\mid\F_{t-1})$.

\emph{Step 1 (variance decomposition).} By the law of total variance and the
$\F_{t-1}$-measurability of $\sigma^2_{\mathrm{post},t}$,
\begin{equation}
  \sigma^2_{\mathrm{prior},t}
  =\underbrace{\Var\!\big(\E[X_{t+1}\mid\F_t]\mid\F_{t-1}\big)}_{=\,\It^2}
   +\underbrace{\E[\sigma^2_{\mathrm{post},t}\mid\F_{t-1}]}_{=\,\sigma^2_{\mathrm{post},t}}
  =\It^2+\sigma^2_{\mathrm{post},t}.
  \label{eq:mi-var}
\end{equation}

\emph{Step 2 (the two Gaussian log-densities).} By joint Gaussianity, conditionally
on $\F_{t-1}$,
\begin{align}
  X_{t+1}\mid X_t,\F_{t-1} &\sim \mathcal N\!\big(m_t,\ \sigma^2_{\mathrm{post},t}\big),
  \label{eq:mi-post}\\[2pt]
  X_{t+1}\mid\F_{t-1}      &\sim \mathcal N\!\big(m_{t-1},\ \sigma^2_{\mathrm{prior},t}\big),
  \label{eq:mi-prior}
\end{align}
so their log-density ratio is
\begin{equation}
  \log\frac{p(X_{t+1}\mid X_t,\F_{t-1})}{p(X_{t+1}\mid\F_{t-1})}
  =\tfrac12\log\frac{\sigma^2_{\mathrm{prior},t}}{\sigma^2_{\mathrm{post},t}}
   -\frac{(X_{t+1}-m_t)^2}{2\,\sigma^2_{\mathrm{post},t}}
   +\frac{(X_{t+1}-m_{t-1})^2}{2\,\sigma^2_{\mathrm{prior},t}}.
  \label{eq:mi-ratio}
\end{equation}

\emph{Step 3 (conditional expectation).} The two quadratic terms of
\eqref{eq:mi-ratio} have conditional means
\begin{equation}
  \E\!\big[(X_{t+1}-m_t)^2\mid\F_{t-1}\big]=\sigma^2_{\mathrm{post},t},
  \qquad
  \E\!\big[(X_{t+1}-m_{t-1})^2\mid\F_{t-1}\big]=\sigma^2_{\mathrm{prior},t},
  \label{eq:mi-quad}
\end{equation}
the first by the tower property,
$\E[(X_{t+1}-m_t)^2\mid\F_{t-1}]=\E\!\big[\E[(X_{t+1}-m_t)^2\mid\F_t]\mid\F_{t-1}\big]
=\E[\sigma^2_{\mathrm{post},t}\mid\F_{t-1}]=\sigma^2_{\mathrm{post},t}$. Hence the two
quadratic terms contribute $-\tfrac12+\tfrac12=0$ in conditional mean, and taking
$\E[\,\cdot\mid\F_{t-1}]$ of \eqref{eq:mi-ratio} leaves
\begin{equation}
  \mathcal J_t
  :=\E\!\left[\log\frac{p(X_{t+1}\mid X_t,\F_{t-1})}{p(X_{t+1}\mid\F_{t-1})}\ \bigg|\ \F_{t-1}\right]
  =\tfrac12\log\frac{\sigma^2_{\mathrm{prior},t}}{\sigma^2_{\mathrm{post},t}}
  \overset{\eqref{eq:mi-var}}{=}\tfrac12\log\!\left(1+\frac{\It^2}{\sigma^2_{\mathrm{post},t}}\right),
  \label{eq:mi-final}
\end{equation}
the stated identity. A further expectation over $\F_{t-1}$ gives the conventional
conditional mutual information $I(X_t;X_{t+1}\mid\F_{t-1})=\E[\mathcal J_t]$.

\emph{Necessity of the measurability restriction.} If $\sigma^2_{\mathrm{post},t}$
depends on $X_t$, Step~1 no longer isolates $\It^2$, since
$\E[\sigma^2_{\mathrm{post},t}\mid\F_{t-1}]\ne\sigma^2_{\mathrm{post},t}$ and the
post-update variance mixes with the mean-update variance. A pure variance channel
makes this concrete: for $X_{t+1}=\sqrt{\omega+\alpha X_t^2}\,\eta_{t+1}$ with
$\eta_{t+1}\sim\mathcal N(0,1)$ independent of $\F_t$, one has
$\E[X_{t+1}\mid\F_t]\equiv0$, so $\It\equiv0$; yet observing $X_t$ sharpens the
predictive variance, giving $\mathcal J_t>0$. Thus $\It$ alone determines neither
$\mathcal J_t$ nor $I(X_t;X_{t+1}\mid\F_{t-1})$ once the measurability restriction
fails. \hfill$\square$

\bibliographystyle{elsarticle-num}
\bibliography{P2_references}

\end{document}